\newcommand{\vva}{``}
\newcommand{\mc}{\mathcal}
\newcommand{\titolo}{Theoretical Performance Analysis of Eigenvalue-based Detection}
\newcommand{\beq}{\begin{equation}}
\newcommand{\eeq}{\end{equation}}
\newcommand{\convas}{\stackrel{\text{\scriptsize a.s.}}{\longrightarrow}}
\newcommand{\convd}{\stackrel{\text{\scriptsize{\textit{D}}}}{\longrightarrow}}
\newcommand{\ra}{\rightarrow}
\newtheorem{theorem}{Theorem}[section]
\newenvironment{proof}[1][Proof]{\begin{trivlist}
\item[\hskip \labelsep {\bfseries #1}]}{\end{trivlist}}
\newcommand{\qed}{\nobreak \ifvmode \relax \else
      \ifdim\lastskip<1.5em \hskip-\lastskip
      \hskip1.5em plus0em minus0.5em \fi \nobreak
      \vrule height0.75em width0.5em depth0.25em\fi}
\begin{document}
%
% paper title
% can use linebreaks \\ within to get better formatting as desired
\title{\titolo}
%
%
% author names and IEEE memberships
% note positions of commas and nonbreaking spaces ( ~ ) LaTeX will not break
% a structure at a ~ so this keeps an author's name from being broken across
% two lines.
% use \thanks{} to gain access to the first footnote area
% a separate \thanks must be used for each paragraph as LaTeX2e's \thanks
% was not built to handle multiple paragraphs
%

\author{Federico Penna,~\IEEEmembership{Student Member,~IEEE,}
       % Roberto Garello,~\IEEEmembership{Fellow,~OSA,}
        and~Roberto Garello,~\IEEEmembership{Senior~Member,~IEEE}% <-this % stops a space
%\thanks{The authors are with the Department
%of Electrical Engineering (DELEN), Politecnico di Torino, Italy,
%e-mail: \{federico.penna, roberto.garello\}@polito.it.

\thanks{F. Penna is with TRM Lab - Istituto Superiore Mario Boella (ISMB) and with the Department
of Electrical Engineering (DELEN), Politecnico di Torino, Italy. R. Garello is with the Department
of Electrical Engineering (DELEN), Politecnico di Torino, Italy.
e-mail: \{federico.penna, roberto.garello\}@polito.it.}
}% <-this % stops a space
\maketitle

\begin{abstract}
%\boldmath
In this paper we develop a complete analytical framework based on Random Matrix Theory for the performance evaluation of Eigenvalue-based Detection.
While, up to now, analysis was limited to false-alarm probability, we have obtained an analytical expression also for the probability of missed detection, by using the theory of spiked population models. A general scenario with multiple signals present at the same time is considered.
The theoretical results of this paper allow to predict the error probabilities, and to set the decision threshold accordingly, by means of a few mathematical formulae.
In this way the design of an eigenvalue-based detector is made conceptually identical to that of a traditional energy detector. 
As additional results, the paper discusses the conditions of signal identifiability
for single and multiple sources. %expressed in terms of Signal-to-Noise ratio. 
All the analytical results are validated through numerical simulations, covering also convergence, identifiabilty and non-Gaussian practical modulations.  
\end{abstract}
% IEEEtran.cls defaults to using nonbold math in the Abstract.
% This preserves the distinction between vectors and scalars. However,
% if the journal you are submitting to favors bold math in the abstract,
% then you can use LaTeX's standard command \boldmath at the very start
% of the abstract to achieve this. Many IEEE journals frown on math
% in the abstract anyway.

% Note that keywords are not normally used for peerreview papers.
\begin{IEEEkeywords}
Cognitive Radio, Spectrum Sensing, Random Matrix Theory, Spiked Population Models.
\end{IEEEkeywords}

% For peer review papers, you can put extra information on the cover
% page as needed:
% \ifCLASSOPTIONpeerreview
% \begin{center} \bfseries EDICS Category: 3-BBND \end{center}
% \fi
%
% For peerreview papers, this IEEEtran command inserts a page break and
% creates the second title. It will be ignored for other modes.
\IEEEpeerreviewmaketitle

\section{Introduction}
% The very first letter is a 2 line initial drop letter followed
% by the rest of the first word in caps.
%
% form to use if the first word consists of a single letter:
% \IEEEPARstart{A}{demo} file is ....
%
% form to use if you need the single drop letter followed by
% normal text (unknown if ever used by IEEE):
% \IEEEPARstart{A}{}demo file is ....
%
% Some journals put the first two words in caps:
% \IEEEPARstart{T}{his demo} file is ....
%
% Here we have the typical use of a "T" for an initial drop letter
% and "HIS" in caps to complete the first word.
%\IEEEPARstart{T}{his} demo file is intended to serve as a ``starter file''
%for IEEE journal papers produced under \LaTeX\ using
%IEEEtran.cls version 1.7 and later.
% You must have at least 2 lines in the paragraph with the drop letter
% (should never be an issue)
%I wish you the best of success.
Eigenvalue-based Detection (EBD) has been introduced \cite{debbah,liang} as an efficient technique to perform spectrum sensing in Cognitive Radio (CR).
Using the EDB approach, the secondary receiver is able to infer the presence or the absence of a primary signal based on the largest and the smallest eigenvalue of the received signal's covariance matrix. This technique requires a cooperative detection setting, which may be accomplished by multiple antennas or cooperation among different users.  In addition to the CR context, the detection of signal components in noisy covariance matrices is a very general problem, with a wide variety of applications in communications, statistics, genetics, mathematical finance, artificial learning.

The main advantage offered by EDB is its robustness to the problem of noise uncertainty, which affects all the previously proposed detection schemes including the widely adopted Energy Detection (ED).
However, while for ED there exist comprehensive theoretical results that allow to express the error probabilities through analytical formulae, a corresponding theory for EBD has not been fully developed yet.

In general, a signal detection scheme can be characterized by defining two types of error probabilities: the probability of \textit{false alarm} and the probability of\textit{ missed detection} (see Sec. \ref{model} for a formal definition). These probabilities depend on the \textit{decision threshold} (the value used by the algorithm to decide whether a signal is present or absent). If analytical formulae are available, it is possible to:
\begin{enumerate}
	\item[\textit{a)}] \textit{predict} the error probabilities of the system as a function of the decision threshold;
	\item[\textit{b)}]	\textit{set} the decision threshold according to the required error constraints.
\end{enumerate}
Such formulae are well-known in case of ED. For EBD, up to now, only approximated criteria were proposed for the estimation of the false-alarm probability \cite{debbah, liang} and, to the best of our knowledge, no exact analytical results have been found for the missed-detection probability yet.

In this paper, by exploiting the spectral properties of the sample covariance matrix under the two complementary conditions of signal present/absent, we derive analytical expressions both for the false-alarm and the missed-detection probability.
The result is a complete probabilistic framework that allows to evaluate the performance of EBD and to determine the proper decision threshold through analytical formulae.

Whereas most of the works on detection consider only the case of a \textit{single} signal to be detected, our results also apply to the case of \textit{multiple} primary signals. This generalization is of interest for the applications in CR, since a secondary user might be located in such a way as to hear different primary signals (each with a different channel). The analytical results derived in this paper show that the number of signals simultaneously present, as well as their powers and their channels, have an impact on the detection performance.

The paper is organized as follows: Sec \ref{maxmin} introduces the signal model and the theoretical foundations of eigenvalue-based detection; Sec. \ref{fa} and \ref{md} derive analytical results for the probabilities of false alarm and missed detection, and for the signal identifiability condition; Sec. \ref{discussion} discusses the problem of setting a proper decision threshold; Sec. \ref{results} validates the analysis through  numerical results; Sec. \ref{concl} concludes.

%\hfill mds
%
%\hfill January 11, 2007

\section{Eigenvalue-based detection}
\label{maxmin}

\paragraph*{Notational remark}
In the following, upper-case boldface letters indicate matrices, lower-case boldface letters indicate vectors, the symbols $^T$ and $^H$ indicate respectively the transpose and conjugate transpose (Hermitian) operators, $\mathrm{tr}(\cdot)$ is the trace of a matrix, $\| \cdot \|$ is the Euclidean norm of a vector, $\mathrm{diag}(\bm{x})$ indicates a square diagonal matrix whose main diagonal entries are taken from the vector $\bm{x}$, $\bm{I}_{N}$ is the identity matrix (of size $N$ if specified), $\bm{0}_{M,N}$ is a $M \times N$ matrix of zeros; the symbol $\triangleq$ stands for \vva defined as'', the symbol $\sim$ for \vva distributed with law'', $\convas$ indicates the almost sure convergence, and $\convd$ the convergence in distribution; $I_{\{\alpha\}}$ is the indicator function which takes value $1$ where the condition $\alpha$ is true and $0$ elsewhere.
\subsection{Signal model}
\label{model}

We consider a \textit{cooperative detection} framework in which $K$ receivers (or antennas) collaborate to sense the spectrum.
Denote with $y_k$ be the discrete baseband complex sample at receiver $k$, and define the $K \times 1$ vector $\bm{y}=\left[y_1 \hdots y_K\right]^T$ containing the $K$ received signal samples.

The goal of the detector is to discriminate between two hypotheses:
\begin{itemize}
\item $\mathcal H_0$ (absence of primary signal). The samples contain only noise:
\begin{equation}
\bm{y}|_{\mathcal{H}_0} = \bm{v}
\end{equation}
where $\bm{v} \sim \mc{N}_{\mathbb{C}} (\bm{0}_{K,1}, \sigma^2_v \bm{I}_K)$ is a vector of circularly symmetric complex Gaussian (CSCG) noise samples;
\item $\mathcal H_1$ (presence of primary signal). For sake of generality, we consider a model where $P$ primary signals may be simultaneously present:
\begin{equation}
\bm{y}|_{\mathcal{H}_1} = \bm{H s} + \bm{v}
\end{equation}
where:  $\bm{H}$ is a $K \times P$ complex matrix, where each element $h_{kp}$ represents the channel between primary user $p$ and receiver $k$ (for simplicity, channels are assumed to be memoryless and constant for the sensing duration); $\bm{s}$ is a $P \times 1$ vector containing the primary signal samples, each coming from one of the $P$ sources. The primary signals are assumed to be complex, zero-mean and mutually independent with covariance matrix
\begin{equation}
\mathrm{E} \; \bm{s} \bm{s}^H  \triangleq \bm{\Sigma} = \mathrm{diag}(\sigma_1^2, \hdots, \sigma_P^2)
\end{equation}
where $\sigma^2_p$ is the variance of the $p$-th primary signal.
%\mbox{\boldmath$\sigma$}  % modo per fare lettere greche in grassetto senza "bm"
\end{itemize}
Under $\mc H_1$, we define the signal-to-noise ratio (SNR) as
\begin{equation}
\label{snr}
\rho \triangleq \frac{\mathrm{E} \; \| \bm{H s} \| ^2 } {\mathrm{E} \; \| \bm{v} \| ^2}
\end{equation}
This amounts to
%\begin{itemize}
%	\item[\textit{a)}] General case ($P \geq 1$):
	\begin{equation}
	\label{snrP}
	\rho =\frac{ \mathrm{tr} \; \bm{H} \bm{\Sigma} \bm{H}^H }{K \sigma^2_v} = \frac{\sum_{p=1}^P \sigma^2_p \| \bm{h}_p \|^2}{K \sigma^2_v}
	\end{equation}
where $\bm{h}_p$ is the $p$-th column of the matrix $\bm{H}$, i.e., the channel vector referred to primary source $p$.

In the \textit{single-user case} ($P=1$), we can drop the index $p$ and the expression of the SNR simplifies to
	\begin{equation}
	\label{snr1}
	\rho|_{P=1} = \frac{\sigma^2 \| \bm{h} \|^2}{K \sigma^2_v}
	\end{equation}

\subsubsection*{Remark}
All throughout this paper it is assumed that $P<K$. When this assumption is not verified, the covariance matrix lacks the necessary degrees of freedom to be able to distinguish the signal components from the noise. Notice that $P$ might be unknown, but to ensure a reliable detection $K$ (which is a receiver parameter) has to be chosen greater than the maximum possible number of primary signals.

\subsection{Spectral properties of the statistical covariance matrix}

Define the statistical covariance matrix of the received signal
\begin{equation}
 \bm{R} \triangleq \mathrm{E} \; \bm{y} \bm{y}^H
\end{equation}
Under $\mc H_0$ and $\mc H_1$ it is equal to, respectively
\begin{equation}
\label{R01}
\bm{R} = \left\{\begin{array}{ll}
 \sigma^2_v \bm{I}_K & (\mc H_0) \\
\bm{H} \bm{\Sigma} \bm{H}^H + \sigma^2_v \bm{I}_K   & (\mc H_1)
\end{array}\right.
\end{equation}
%\begin{eqnarray}
%\bm{R}|_{\mathcal{H}_0} = \sigma^2_v \bm{I}_K) \\
%\bm{R}|_{\mathcal{H}_1} = \bm{H} \bm{\Sigma} \bm{H}^H + \sigma^2_v \bm{I}_K)
%\end{eqnarray}
% $\bm{R}|_{\mathcal{H}_0} = \sigma^2_v \bm{I}_K)$, whereas under $\mc H_1$: $\bm{R}|_{\mathcal{H}_1} = \bm{H} \bm{\Sigma} \bm{H}^H + \sigma^2_v \bm{I}_K)$.
Let $\lambda_1 \geq \hdots \geq \lambda_K$ be the eigenvalues of $\bm{R}$ (without loss of generality, sorted in decreasing order).

Under $\mc H_0$, it is immediate to verify that% they are all equal to $\sigma^2_v$
\beq
\label{l0}
\lambda_i|_{\mc H_0} = \sigma^2_v \; \; \; \forall i = 1, \hdots, K
\eeq

Under $\mathcal{H}_1$, there are $(K-P)$ eigenvalues equal to $\sigma^2_v$ and $P$ greater, since $\bm{H} \bm{\Sigma} \bm{H}^H$ is positive-semidefinite with rank $P$.
The eigenvalues in this case can be written as
\beq
\label{lambdai}
\lambda_i|_{\mc H_1} = \left\{\begin{array}{ll}
s_i + \sigma^2_v & (1 \leq i \leq P) \\
\sigma^2_v       & (P < i \leq K)
\end{array}\right.
\end{equation}
%assuming that $P<K$.
where $s_1 \geq \hdots \geq s_P > 0$ denote the $P$ non-zero eigenvalues of the \vva signal covariance matrix'' $\bm{H \Sigma H}^H$, and are found by solving the characteristic equation
\beq
\label{poly1}
\begin{array}{l}
\mathrm{det}\left(\bm{H \Sigma H}^H - s \bm{I}_K \right)=0 \\
 \mbox{s.t. } s \neq 0
\end{array}
\eeq
Because of the assumption $P<K$, the rank of the signal covariance matrix is $P$. It is possible to reduce the degree of the characteristic polynomial down to $P$ by applying the generalized Matrix Determinant Lemma (MDL) \cite{matref}
\begin{eqnarray}
&& \mathrm{det}\left(\bm{H \Sigma H}^H - s \bm{I}_K \right) = \nonumber \\
&& = \mathrm{det}(\bm{\Sigma}) \; \mathrm{det}(-s\bm{I}_K ) \; \mathrm{det}\left(\bm{\Sigma}^{-1} - \frac{1}{s} \bm{H}^H \bm{H} \right) = \nonumber \\
&& = \left( \prod_{p=1}^P \sigma_p^2 \right)  (-s)^{K-P} \; \mathrm{det}\left(\bm{H}^H \bm{H} - s \ \bm{\Sigma}^{-1} \right)
\label{mdl}
\end{eqnarray}
We note that the left-hand factor in (\ref{mdl}) is a constant with respect to $s$, the middle term gives rise to the $(K-P)$ trivial solutions $s=0$, while the right-hand term determines the non-zero roots. The signal eigenvalues $s_1 , \hdots , s_P$ may therefore be calculated from the simplified characteristic equation
\beq
\label{sP}
\mathrm{det}\left(\bm{H}^H \bm{H} - s \ \bm{\Sigma}^{-1} \right) = 0
\eeq
which has degree $P$ instead of $K$. Since $\bm{\Sigma}$ is diagonal, $\bm{\Sigma}^{-1} = \mathrm{diag}(\sigma_1^{-2}, \hdots, \sigma_P^{-2})$.

In the case of \textit{single primary user} ($P = 1)$,
there is one single signal eigenvalue
and, from (\ref{sP}), it has a very simple expression:
\beq
\label{s1}
s_1|_{P=1} = \| \bm{h} \|^2 \sigma^2
\eeq
where the index has been dropped like in (\ref{snr1}).
%The remaining $K-1$ are all equal to $\sigma^2_v$. The expression of $\lambda_1$ becomes very simple, since
%. Hence
%\beq
%\label{lam1}
%\lambda_i|_{P=1} = \left\{
%\begin{array}{ll}
% \| \bm{h} \|^2 \sigma^2 + \sigma^2_v & i=1 \\
%  \sigma^2_v & i\neq 1
%\end{array} \right.
%\eeq
%

The spectral properties of $\bm R$, summarized by (\ref{l0}) and (\ref{lambdai}), motivate the adoption of the \textit{ratio between the largest and the smallest eigenvalue} of the covariance matrix as a test statistic to discriminate between the two hypotheses:
under $\mc H_0$ the ratio is equal to $1$, under $\mc H_1$ it is greater. This detection scheme was first proposed in \cite{debbah, liang}.
%The test statistic chosen in this way has the advantage of being independent from the noise variance (i.e., \textit{blind}).
%However, it relies on the condition $P<K$. If this condtion is no longer satisfied, a preliminary noise estimation step is necessary (cf. \cite{}) other

\subsection{Sample covariance matrix}

In practice, the statistical correlation matrix $\bm{R}$ is estimated through a \textit{sample covariance matrix}. Introduce $N$ as the number of samples collected by each receiver during the sensing period. It is assumed that consecutive samples are uncorrelated and that all the random processes involved (signals and noise) remain stationary for the sensing duration.
%, which is supposed to be as short as possible.
Then, let $\bm{s}(n)$, $\bm{v}(n)$ and $\bm{y}(n)$ be, respectively, the transmitted signal vector, the noise vector and the received signal vector at time $n$; define the $P \times N$ matrix
\beq
\bm{S} \triangleq \left[\bm{s}(1) \hdots \bm{s}(N)\right]
\eeq
and the $K \times N$ matrices
\begin{eqnarray}
\label{YN}
&\bm{V} \triangleq \left[\bm{v}(1) \hdots \bm{v}(N)\right]  \\
&\bm{Y} \triangleq \left[\bm{y}(1) \hdots \bm{y}(N)\right] = \bm{H S} + \bm{V}
\end{eqnarray}
The $K \times K$ sample covariance matrix $\bm{R}(N)$ is then defined as
\begin{equation}
\bm{R}(N) \triangleq \frac{1}{N} \bm{Y}\bm{Y}^H
\end{equation}
Denoting with $\hat{\lambda}_1 \geq \hdots \geq \hat{\lambda}_K$ its eigenvalues, the test statistic used for detection is
\begin{equation}
\label{T}
T \triangleq \frac{\hat{\lambda}_1}{\hat{\lambda}_K}
\end{equation}
Although $\bm{R}(N)$ converges to $\bm{R}$ as $N$ tends to infinity, for finite $N$ its properties depart from those of the statistical covariance matrix. In typical sensing applications $N$ is expected to be quite large (to increase the detection reliability) but still not enormous (to reduce the sensing time).
With such realistic values of $N$, the eigenvalues have no longer a deterministic behavior as in (\ref{R01}), but are characterized by a \textit{probability distribution}. Therefore the discrimination criterion based on the eigenvalues is not as sharp-cutting as in the ideal case and may be affected by two possible error events: \textit{false alarms} and \textit{missed detections}. Denoting with $\gamma$ the \textit{decision threshold} employed by the detector, such that
\beq
\mbox{decision} = \left\{ \begin{array}{ll} \mc H_0 & \mbox{if  } \, T<\gamma \\ \mc H_1 & \mbox{if   } \, T \geq \gamma \end{array} \right. \mbox{,} \nonumber
\eeq
the probability of false alarm may be expressed as
\beq
\label{Pfa}
P_{fa} = \Pr (T\geq \gamma | \mc{H}_0)
\eeq
and the probability of missed detection as
\beq
\label{Pmd}
P_{md} = \Pr (T<\gamma | \mc{H}_1)
\eeq
These probabilities depend on the distribution of $T$ under the two hypotheses. The probability distribution function (PDF) and the cumulative distribution function (CDF) of $T$ will be indicated as $f_{T|\mc H_i}(t)$ and $F_{T|\mc H_i}(t)$, respectively, for $i \in \{0,1\}$. Thus, (\ref{Pfa}) and (\ref{Pmd}) may be written as
\begin{eqnarray}
\label{Pfa1}
& P_{fa} = 1 - F_{T|\mc H_0}(\gamma) \\
\label{Pmd1}
& P_{md} = F_{T|\mc H_1}(\gamma)
\end{eqnarray}
In the next sections the distribution of $T$ in both cases will be derived, using tools from Random Matrix Theory (RMT) which allow to analyze the spectral properties of large-dimensional sample covariance matrices. This makes it possible to evaluate the detection performance, given a decision threshold, as well as to express the threshold as a function of the required probabilities of false alarm or missed detection (by inverting (\ref{Pfa1}) and (\ref{Pmd1})).

\section{False-alarm probability analysis}
\label{fa}

In this section, we first introduce some useful results from RMT that express the limiting distributions to which the largest and the smallest eigenvalues of $\bm{R}(N)$ converge as $N$ and $K$ grow.
Then, we exploit these theoretical results to find the limiting distribution of the test statistic $T$ and, through the relation (\ref{Pfa1}), we derive the false-alarm probability.

Most of the results of this section also appear, in a slightly different form, in \cite{commlett}. Here the results are stated in their entirety and are introduced by a a more rigorous mathematical derivation. Also, a new notation is adopted to emphasize the link between the Wishart case ($\mc H_0$) and the spiked-population case ($\mc H_1$, discussed in Sec. \ref{md}).

%Citare \cite{commlett} o no??
%
%\textbf{$\rightarrow$ AGGIUSTARE INTRO}

\subsection{Relevant results from Random Matrix Theory}

Under $\mc H_0$, since the columns of $\bm{Y}$ are zero-mean independent complex Gaussian vectors, the sample covariance matrix $\bm{R}(N)$ is a \textit{complex Wishart matrix} \cite{wishart}.

The fluctuations of the eigenvalues of Wishart matrices have been thoroughly investigated by RMT (see \cite{verdu} and \cite{bai} for an overview).
The most remarkable intuition of RMT
%, a rather modern branch of mathematics,
is that in many cases the eigenvalues of matrices with random entries turn out to converge to some fixed distribution,
when both the dimensions of the signal matrix tend to infinity with the same order.
For Wishart matrices the limiting joint eigenvalue distribution has been known for many years \cite{marchenko}; then, more recently, also the marginal distributions of single ordered eigenvalues have been found.

By exploiting some of these results, we are able to express the asymptotical values of the largest and the smallest eigenvalue of $\bm{R} (N)$ as well as their limiting distributions. We state the following theorem, which summarizes a number of relevant results.

\begin{theorem}
\label{theo1}
\emph{Convergence of the smallest and largest eigenvalues under $\mc H_0$.}
Let
\beq
\label{c}
c \triangleq \frac{K}{N}
\eeq
and assume that for $K,N \ra \infty$
\beq
c \ra \overline c \in (0,1)
\eeq
Define:
\begin{eqnarray}
& \mu_\pm(c) \triangleq \left( c^{1/2} \pm 1 \right)^2 \\
& \nu_\pm(c) \triangleq \left( c^{1/2} \pm 1 \right) \left( c^{-1/2} \pm 1\right)^{1/3}
\end{eqnarray}
Then, as $N, K \rightarrow \infty$, the following holds:

\begin{itemize}
\item[(i)] \textit{Almost sure convergence of the largest eigenvalue }\beq \hat{\lambda}_1 \convas \sigma^2_v \; \mu_+(c) \eeq
\item[(ii)] \textit{Convergence in distribution of the largest eigenvalue}
\beq
N^{2/3} \; \frac{\hat{\lambda}_1 - \sigma^2_v \; \mu_+(c)}{\sigma^2_v \; \nu_+(c)} \convd \mc W_2
\eeq
\item[(iii)] \textit{Almost sure convergence of the smallest eigenvalue} \beq \hat{\lambda}_K \convas \sigma^2_v \; \mu_-(c) \eeq
\item[(iv)]   \textit{Convergence in distribution of the smallest eigenvalue}
\beq
N^{2/3} \; \frac{\hat{\lambda}_K - \sigma^2_v \; \mu_-(c)}{\sigma^2_v \; \nu_-(c)} \convd \mc W_2
\eeq
where $\mc W_2$ is the Tracy-Widom law of order 2, defined in Appendix \ref{tw_app}.
\end{itemize}
\end{theorem}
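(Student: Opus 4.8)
The plan is to reduce all four statements to known extreme-eigenvalue results for the \emph{standard} complex Wishart matrix, and then invoke, in turn, a bulk-edge law, an almost-sure edge-convergence theorem, and the Tracy-Widom fluctuation theorems at the two soft edges of the Marchenko-Pastur support. First I would normalize out the noise power: writing $\bm Y|_{\mc H_0} = \bm V = \sigma_v \bm X$, where $\bm X$ is $K \times N$ with i.i.d. $\mc N_{\mathbb C}(0,1)$ entries, one has $\bm R(N) = \sigma_v^2 \cdot \frac{1}{N}\bm X \bm X^H$, so every eigenvalue of $\bm R(N)$ is exactly $\sigma_v^2$ times the corresponding eigenvalue of the standard Wishart matrix $\bm W \triangleq \frac{1}{N}\bm X \bm X^H$. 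Consequently each of the four claims follows from its unscaled counterpart for $\bm W$, with the factor $\sigma_v^2$ reinstated in the centering and scaling; note that in the fluctuation statements the $\sigma_v^2$ cancels between numerator and denominator.

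For the almost-sure statements (i) and (iii), I would appeal to the Marchenko-Pastur theorem \cite{marchenko}, which gives the limiting empirical spectral distribution of $\bm W$ supported on $[\mu_-(c), \mu_+(c)] = [(1-\sqrt c)^2, (1+\sqrt c)^2]$, together with the Bai-Yin edge-convergence result \cite{bai}. Because the entries are Gaussian (hence have finite fourth moment) and because the hypothesis $\overline c \in (0,1)$ forces $\mu_-(\overline c) > 0$, the largest and smallest eigenvalues converge almost surely to the two edges, $\hat\lambda_1/\sigma_v^2 \convas \mu_+(c)$ and $\hat\lambda_K/\sigma_v^2 \convas \mu_-(c)$. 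The condition $\overline c < 1$ is essential here: it keeps the lower edge strictly positive, so that the bottom of the spectrum is a genuine \emph{soft} edge rather than the hard edge at the origin that would arise if $\overline c = 1$.

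The fluctuation results (ii) and (iv) rest on the Tracy-Widom theory for the soft edges of the complex (unitary-invariant) Wishart ensemble. For the top edge I would invoke the classical theorem that, after centering by $\mu_+(c)$ and rescaling by $N^{-2/3}\nu_+(c)$, the largest eigenvalue of $\bm W$ converges in distribution to $\mc W_2$ (the $\beta = 2$ case appropriate to complex data); reinserting $\sigma_v^2$ yields (ii). Statement (iv) is the analogous soft-edge law at the \emph{lower} edge, again available because $\overline c < 1$: the same $N^{2/3}$ scaling and the same limit $\mc W_2$ apply, now with the local constants $\mu_-(c)$ and $\nu_-(c)$. The main obstacle is precisely this last step. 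Not only is the lower-edge fluctuation result less standard than the classical largest-eigenvalue theorem, but the constant $\nu_-(c) = (c^{1/2}-1)(c^{-1/2}-1)^{1/3}$ is negative for $c<1$; the sign encodes the reflection of the spectrum at the bottom edge (the bulk accumulates \emph{above} $\mu_-$), which is exactly what lets the \emph{same} law $\mc W_2$ govern both extremes. Verifying that the cited constants match the present $c = K/N$ normalization — the literature frequently states the edge laws in an $M \times N$ convention with $M \geq N$ — is the bookkeeping that would require the most care.
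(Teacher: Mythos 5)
Your proposal is correct and follows essentially the same route as the paper, which likewise proves the theorem by reducing to known results — Marchenko--Pastur and Bai--Yin for (i) and (iii), the Johansson/Johnstone Tracy--Widom law at the upper soft edge for (ii), and the Feldheim--Sodin lower-soft-edge result for (iv) — up to changes of variables. Your explicit $\sigma_v^2$-normalization, the remark that $\overline c<1$ keeps the lower edge soft, and the sign discussion for $\nu_-(c)$ merely spell out bookkeeping the paper leaves implicit.
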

\begin{proof}
The claims of this theorem follow from different results of RMT, up to some changes of variables and using a uniform notation. Proofs of the original theorems appear in the references listed below.\\
Claims (i) and (iii) descend from the work by Marchenko and Pastur \cite{marchenko}, later extended by Silverstein, Bai, Yin, \textit{et al.} \cite{bai}.\\
Claim (ii) was proved, under the assumption of Gaussian entries, by Johansson \cite{scarlett}, Johnstone \cite{john} and Soshnikov \cite{sosh}, and generalized to the non-Gaussian case by P\'ech\'e \cite{peche}. \\
Claim (iv) derives from a very recent result by Feldheim and Sodin \cite{small}.
\end{proof}
%\textbf{$\rightarrow$ NON SO SE VA BENE CHIAMARLA 'PROOF', PIU CHE ALTRO VORREI FAR CAPIRE CHE L'ENUNCIATO E' FINITO}

\subsection{Derivation of $F_{T|\mc H_0}$ and $P_{fa}$ }

The results of Theorem \ref{theo1} allow, through some algebraic manipulations, to determine the limiting distribution of the test statistic $T$ under the hypothesis $\mc H_0$.
Although the resulting distribution is obtained under the joint limit $K,N \rightarrow \infty$, simulations show that it provides an accurate estimation of the false-alarm probability already for not-so-large values of $K$ and $N$. Numerical results investigating this issue are presented in Sec. \ref{results}.

In order to apply claims (ii) and (iv), we define:
\begin{align}
& L_1 \triangleq N^{2/3} \; \frac{\hat{\lambda}_1 - \sigma^2_v \; \mu_+(c)}{\sigma^2_v \; \nu_+(c)}  \\
& L_K \triangleq N^{2/3} \; \frac{\hat{\lambda}_K - \sigma^2_v \; \mu_-(c)}{\sigma^2_v \; \nu_-(c)}
\end{align}
For the above-mentioned theorem, both $L_1$ and $L_K$ converge in distribution to the Tracy-Widom law $\mc W_2$:
\beq
f_{L_1}(z), f_{L_K}(z) \ra f_{\mc W_2}(z)
\eeq
where $f_{\mc{W}_2}(\cdot)$ represents the PDF associated with the law $\mc W_2$, as defined in Appendix \ref{tw_app}.

Then, from (\ref{T}), the test statistic $T$ becomes
\beq
T =\frac{\hat{\lambda}_1}{\hat{\lambda}_K} = \frac{ N^{-2/3} \nu_+(c) L_1 +  \mu_+(c) }{N^{-2/3}  \nu_-(c) L_K + \mu_-(c)}
\eeq
Notice that the term $\sigma^2_v$ is canceled out in the ratio (this is the reason that makes the detection threshold \vva blind'' with respect to the noise power).
We denote with $l_1$ and $l_K$, respectively, the numerator and the denominator of $T$, and with $\overline{f}_{l_1}(z)$ and $\overline{f}_{l_K}(z)$ their limiting PDFs for $N,K \ra \infty$. These distributions are the same as those of $L_1$ and $L_K$, up to a linear random variable transformation:
\begin{equation}
\overline{f}_{l_1}(z) = \frac{N^{2/3}}{ \nu_+(c)} f_{\mc{W}_2}\left( \frac{N^{2/3}}{ \nu_+(c)} (z-\mu_+(c)) \right)
\end{equation}
For the denominator, it must be observed that $\nu_-(c)<0$ for the considered range $c \in (0,1)$. Thus
%\begin{eqnarray}
\begin{align}
\overline{f}_{l_K}(z) & = \frac{N^{2/3}}{ |\nu_-(c)|} f_{\mc{W}_2}\left( \frac{N^{2/3}}{ |\nu_-(c)|} (\mu_-(c)-z) \right) \nonumber \\
& = -\frac{N^{2/3}}{ \nu_-(c)} f_{\mc{W}_2}\left( \frac{N^{2/3}}{ \nu_-(c)} (z-\mu_-(c)) \right)
\end{align}
%\end{eqnarray}
To express the distribution of $T$, we assume that $f_{l_1}(l_1)$ and $f_{l_K}(l_K)$ are asymptotically independent, as it is reasonable for the size of the covariance matrix tending to infinity (and confirmed by following numerical results):
%for Theorem \ref{theo1} since their limiting distributions do not depend on each other:
\beq
\label{indep}
\overline{f}_{l_1, l_K}(l_1, l_K) \approx  \overline{f}_{l_1}(l_1)  \overline{f}_{l_K}(l_K)
\eeq
Then, using the formula for the quotient of random variables \cite{ratiodist}, the resulting ratio distribution writes:
\begin{align}
%f_{T|\mc H_0}(t) \ra \left\{\begin{array}{ll}
%\int_{-\infty}^{+\infty} |x| f_{l_1}(tx) f_{l_K}(x) dx & t>1 \\
%0                                                              & \mbox{otherwise}
%\end{array}\right.
\overline{f}_{T|\mc H_0}(t) & = \left[ \int_{-\infty}^{+\infty} |x| \overline{f}_{l_1,l_K}(tx,x) dx  \right] \cdot {I}_{\{t>1\}} \nonumber \\
& = \left[ \int_{0}^{+\infty} x \overline{f}_{l_1}(tx) \overline{f}_{l_K}(x) dx  \right] \cdot {I}_{\{t>1\}}
\label{pdfT}
\end{align}
where
the lower integration limit has been changed to $0$ instead of $-\infty$, since the covariance matrix is positive-semidefinite therefore all the eigenvalues are non-negative; the condition $t>1$ is necessary to preserve the order of the eigenvalues, since the distributions are defined under the assumption $l_1>l_K$.

Finally, we denote with $\overline{F}_{T|\mc H_0}(\gamma)$ the CDF corresponding to (\ref{pdfT}). For $N$ and $K$ large enough, we can approximate $F_{T|\mc H_0}(\gamma)$,
which is needed to compute $P_{fa}$ from (\ref{Pfa1}), with the asymptotical distribution:
\beq
F_{T|\mc H_0}(\gamma) \approx \overline{F}_{T|\mc H_0}(\gamma)
\eeq
The expression of $\overline{F}_{T|\mc H_0}$ depends on $N$ and $c$, i.e., $N$ and $K$. Simulation results show that the approximation is accurate for practical values of $N$ and $K$, also quite far from the asymptotical region.

Clearly, the practical interest in the relation between $P_{fa}$ and $\gamma$ found here is that it allows to determine the decision threshold as a function of the required false-alarm probability; this application is discussed in more detail in Sec. \ref{discussion}.

It is interesting to note that the distribution $F_{T|\mc H_0}$ for finite $N$ and $K$ can also be expressed \textit{exactly}, by following a completely different approach. This exact distribution and the corresponding detection threshold have been found in \cite{crowncom}. The drawback of the \vva exact'' approach is its complexity, which makes implementation difficult when $K$ and $N$ are large.

%\subsection{Exact results}
%
%Assume that:
%\begin{description}
%    \item[H1.]
%    \item[H2.]
%\end{description}
%Then, under [H1-...], the following theorems hold:
%
%\begin{itemize}
%	\item Theorem 1 (Zanella, Chiani): marginal distribution of the largest and the smallest eigenvalues.
%
%	\item Theorem 2 (Chiani, Zanella): joint distribution of an arbitrary subset of the ordered eigenvalues of Wishart matrices.
%\end{itemize}
%
%\subsection{Asymptotic results}
%
%Other results have been derived under asymptotic assumptions. Therefore, in addition to [H1-2], assume that:
%\begin{description}
%    \item[H3.]
%\end{description}
%Then, under [H1-H3], the following theorems hold:
%
%\begin{itemize}
%	\item Theorem 1 (Marchenko-Pastur 1967): almost sure convergence of the largest and the smallest eigenvalue.
%
%	\item Theorem 2 (Johnstone 2001): convergence in probability of the largest eigenvalue.
%
% \item Theorem 3 (Feldheim, Sodin 2008): convergence in probability of the smallest eigenvalue.
%\end{itemize}
%
%
%\subsection{Previous work}
%Soglia asintotica e semiasintotica (non esatte)
%
%
%\subsection{Asymptotical analysis}
%Da Comm. Letters
%
%\subsection{Non-asymptotical analysis}
%Da Crowncom.
%
%Dire che disponiamo di risultati esatti, ma troppo complessi
%
%\subsubsection{Exact}
%
%\subsubsection{Approximated}
%
%\subsection{Convergence}

%--------------------------------------

\section{Missed-detection probability analysis}
\label{md}

In this section we use an approach based on RMT to derive the limiting distribution of $T$ under $\mc H_1$ and consequently $P_{md}$.
As a preliminary step, we show that under this hypothesis $\bm R(N)$ can be reduced to a so-called \textit{spiked population model}, i.e., a model where the statistical covariance matrix is a finite-rank perturbation of the identity. Spiked population models were introduced by Johnstone \cite{john} and have an important role in Principal Component Analysis (PCA), with many statistical applications ranging from genetics to mathematical finance.
The fluctuations of the eigenvalues of sample covariance matrices constructed from spiked models are nowadays a hot topic in RMT.

\subsection{Reduction to the Spiked Population Model}

Under $\mc H_1$, the received signal matrix $\bm{Y}$ contains some Gaussian entries, like in the Wishart case, along with a certain number ($P$) of signal components.
In order to put into evidence the spiked structure of $\bm R(N)$, the received signal matrix $\bm Y$ (\ref{YN}) needs to be rewritten in the form
\beq
\bm Y = \bm{T Z}
\eeq
where $\bm T$ is a block matrix of size $K \times (P+K)$ defined as
\beq
\bm T = \left[
\begin{array}{c|c}
\frac{1}{\sigma_v} \bm {H \Sigma}^{1/2} & \bm{I}_K
\end{array}
\right]
\eeq
and $\bm Z$, of size $(P+K) \times N$, is defined as
\beq
\bm Z = \left[
\begin{array}{c}
\sigma_v \bm {\Sigma}^{-1/2} \bm{S} \vspace{1mm}\\
\hline
\vspace{-2mm}\\ \bm{V}
\end{array}
\right]
\eeq
This decomposition has been chosen such that all the entries $z_{ij}$ of $\bm Z$ ($ 1 \leq i \leq P+K, \; \; 1 \leq j \leq N$) have the following properties:
\begin{eqnarray}
\label{prop1}
& \mathrm{E}\; z_{ij} = 0 \\
& \mathrm{E}\; |z_{ij}|^2 = \sigma^2_v
\label{prop2}
\end{eqnarray}
which are necessary conditions for Theorem \ref{theo2} to hold.
The covariance matrix becomes then
\beq
\bm{R}(N) = \frac{1}{N} \bm{TZZ}^H\bm{T}^H
\eeq
which is exactly the model of \cite{baik_silv}, \cite{baik_benarous_peche} and \cite{feral_peche}.

Finally, we denote with $t_1, \hdots, t_K$ the eigenvalues of $\bm{TT}^H$. It follows from the structure of $\bm T$ that $P$ eigenvalues are different from $1$ (without loss of generality we put them in the first $P$ positions: $t_1 \geq \hdots \geq t_P$)  and the remaining $K-P$ are ones. To express the $P$ \vva spike eigenvalues'' (that represent the perturbation with respect to the pure-noise model), we notice that
\beq
\bm {TT}^H = \frac{1}{\sigma^2_v} \bm {H \Sigma H}^H + \bm I_K
\eeq
and the eigenvalues $t_1, \hdots, t_P$ result from the solution of
\beq
\label{poly2}
\begin{array}{l}
\mathrm{det}\left(\bm{H \Sigma H}^H - \sigma^2_v (t-1) \bm{I}_K \right)=0 \\
 \mbox{s.t. } t \neq 1
\end{array}
\eeq
The structure of the problem is identical to that of (\ref{poly1}), with the change of variable $s = \sigma^2_v (t-1)$. We can therefore conclude that the \vva spike eigenvalues'' $t_p$ are linked to the non-zero eigenvalues of the statistical covariance matrix, $s_p$, by the relation
\beq
\label{tp}
t_p = \frac{s_p}{\sigma^2_v}+1, \; \; \; 1 \leq p \leq P
\eeq
In general, the values of $s_p$ are calculated using (\ref{mdl}); in the case of \textit{single primary user} ($P=1$), there is the simplified expression (\ref{s1}) which leads to
\beq
t_1|_{P=1} = \| \bm{h} \|^2 \frac{\sigma^2}{\sigma^2_v}+1
\eeq
\subsubsection*{Relation between spike eigenvalues and SNR}
The spike eigenvalues are related with the SNR; this fact turns out to be useful especially in the case of $P=1$.
From (\ref{tp}) we can write
\beq
\sum_{p=1}^P t_p = \frac{1}{\sigma^2_v} \sum_{p=1}^P s_p + P
\eeq
but, from the eigendecomposition of $\bm{H \Sigma H}^H$ and from (\ref{snrP}) it follows that
\beq
\label{eigendiag}
\sum_{p=1}^P s_p = \mathrm{tr} \; \bm{H \Sigma H}^H = \rho K \sigma^2_v
\eeq
hence
\beq
\label{rho_eig}
\sum_{p=1}^P t_p = K \rho + P
\eeq
Therefore, in the \textit{case of one primary user} ($P=1$), the (unique) spike eigenvalue may be expressed directly as a function of the SNR:
\beq
\label{t1rho}
t_1|_{P=1} = K \rho +1
\eeq
Note that, by exploiting the property (\ref{eigendiag}), one could also obtain (\ref{s1}) without resorting to the characteristic equation.
%
%Note that, by replacing $\rho$ with its previously-found expression (\ref{snr1}), one could also obtain (\ref{lam1}) without resorting to the characteristic equation.

In the \textit{case of multiple primary signals} ($P>1$), the \textit{sum} of the spike eigenvalues is related to the SNR, but not the \textit{single} eigenvalues. Therefore, to compute the $t_p$ (in particular $t_1$, which is needed to apply Theorem \ref{theo2}), it is necessary to know the channel matrix and the power of primary signals and use (\ref{sP}).

%	RELAZIONE CON SNR -- VECCHIA
%
%The eigenvalues of the statistical covariance matrix are related with the SNR; this fact turns out to be useful especially in the case of $P=1$. Since $s_1 , \hdots , s_P$ are the eigenvalues of $\bm{H \Sigma H}^H$, the numerator of the SNR formula (\ref{snrP}) may be expressed as
%\beq
%\mathrm{tr} \; \bm{H \Sigma H}^H = \sum_{p=1}^P s_p = \sum_{k=1}^K \lambda_k - K \sigma^2_v
%\eeq
%hence
%\beq
%\label{rho_eig}
%\rho = \frac{\sum_{p=1}^P s_p}{K \sigma^2_v} = \frac{\sum_{k=1}^K \lambda_k}{K \sigma^2_v} - 1
%\eeq
%
%In the \textit{case of one primary user} ($P=1$), the \textit{signal eigenvalue may be expressed directly as a function of the SNR} from (\ref{rho_eig}):
%\beq
%\label{rho_eig1}
%\lambda_1|_{P=1} = \sigma^2_v (K \rho +1)
%\eeq
%Note that, by replacing $\rho$ with its previously-found expression (\ref{snr1}), one could also obtain (\ref{lam1}) without resorting to the characteristic equation.
%
%In the \textit{case of multiple primary signals} ($P>1$), the \textit{sum} of the signal eigenvalues is related to the SNR, but this is not true for any \textit{single} eigenvalue. Therefore, to compute the $\lambda_k$ (in particular $\lambda_1$), it is necessary to know the channel matrix and the power of primary signals and use (\ref{sP}).
%

\subsection{Relevant results from Random Matrix Theory}
\label{obs}

We are now ready to state the following theorem which provides a useful result on the convergence of the largest eigenvalue in spiked population models.

\begin{theorem}
\label{theo2}
\emph{Convergence of the largest eigenvalue under $\mc H_1$.}
Again, assume that for $K,N \ra \infty$
\beq
c = \frac{K}{N} \ra \overline c \in (0,1)
\eeq
In addition, assume that for all $i,j$ s.t. $ 1 \leq i \leq P+K, \; \; 1 \leq j \leq N$:
\begin{itemize}
	\item[($A_1$)] $ \mathrm{E}\; z_{ij} = 0 $
	\item[($A_2$)] $\mathrm{E}\; (\Re\ z_{ij})^2 = \mathrm{E}\; (\Im\ z_{ij})^2 = \frac{\sigma^2_v}{2}$
	\item[($A_3$)] $\forall k>0, \; \; \mathrm{E}\; | z_{ij}|^{2k} < \infty$ and $\mathrm{E}\; (\Re \ z_{ij})^{2k+1} = \mathrm{E}\ (\Im\; z_{ij})^{2k+1} = 0$
	\item[($A_4$)] $\mathrm{E}\; (\Re \ z_{ij})^4 = \mathrm{E}\; (\Im \ z_{ij})^4  = \frac{3}{4}\sigma^4_v$
\end{itemize}
Define:
\begin{eqnarray}
& \mu_s(t_1, c) \triangleq t_1 \left( 1+ \frac{c}{t_1-1} \right) \\
& \nu_s(t_1, c) \triangleq t_1 \sqrt{1-\frac{c}{(t_1-1)^2}}
\end{eqnarray}
Then, as $N, K \rightarrow \infty$, the following holds:
\begin{itemize}
\item[(i)] \textit{Almost sure convergence of the largest eigenvalue: phase transition phenomenon}\\
If $t_1 > 1 + c^{1/2}$:
\beq
\hat{\lambda}_1 \convas \sigma^2_v \mu_s(t_1, c)
\eeq
If $t_1 \leq 1 + c^{1/2}$:
\beq \hat{\lambda}_1 \convas \sigma^2_v \; \mu_+(c) \eeq
\item[(ii)] \textit{Convergence in distribution of the largest eigenvalue} \\
Let $m$ (with $1 \leq m \leq P$) be the multiplicity of the first spike eigenvalue $t_1$.  \\
If $t_1 = \hdots = t_m > 1 + c^{1/2}$:
\beq
\label{Gm}
N^{1/2} \; \frac{\hat{\lambda}_1 - \sigma^2_v \; \mu_s(t_1,c)}{\sigma^2_v \; \nu_s(t_1,c)} \convd \mc G_m
\eeq
If $t_1 = \hdots = t_m = 1 + c^{1/2}$:
\beq
\label{Fm}
N^{2/3} \; \frac{\hat{\lambda}_1 - \sigma^2_v \; \mu_+(c)}{\sigma^2_v \; \nu_+(c)} \convd \mc{A}_m
\eeq
If $t_1 < 1 + c^{1/2}$:
\beq
\label{non_id}
N^{2/3} \; \frac{\hat{\lambda}_1 - \sigma^2_v \; \mu_+(c)}{\sigma^2_v \; \nu_+(c)} \convd \mc W_2
\eeq
where $\mc{A}_m$ and $G_m$ are distribution laws defined in Appendix \ref{at_app} and \ref{gue_app}, respectively.

\end{itemize}
\end{theorem}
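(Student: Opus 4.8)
The plan is to exploit the fact that the reduction already carried out—writing $\bm{R}(N) = \frac{1}{N}\bm{T}\bm{Z}\bm{Z}^H\bm{T}^H$ with population covariance $\bm{T}\bm{T}^H$ equal to a finite-rank perturbation of the identity—places the problem squarely within the spiked population model studied in \cite{baik_silv}, \cite{baik_benarous_peche} and \cite{feral_peche}. Accordingly the proof consists of three moves: (a) verifying that the hypotheses of those works are satisfied by our matrix; (b) invoking their limiting statements; and (c) performing a change of variables to restate them in the notation of the theorem. As in Theorem \ref{theo1}, no fresh probabilistic estimate is required—the task is one of identification and translation.

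First I would record that, by (\ref{tp}), the spectrum of $\bm{T}\bm{T}^H$ consists of the $P$ spikes $t_1 \geq \cdots \geq t_P > 1$ together with $K-P$ unit eigenvalues, which is exactly the spiked structure required. Next I would note that the entries of $\bm{Z}$ are zero-mean with common variance $\sigma^2_v$ by (\ref{prop1})--(\ref{prop2}), and that the stronger assumptions $(A_1)$--$(A_4)$ stated in the theorem match the first four moments of a complex Gaussian variable (the noise block $\bm{V}$ satisfies them exactly, and the signal block $\sigma_v \bm{\Sigma}^{-1/2}\bm{S}$ is assumed to as well). These are precisely the moment conditions under which the universality result of \cite{feral_peche} removes the Gaussian restriction, so that all the cited convergence statements apply to our matrix once the common variance $\sigma^2_v$ is factored out.

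The core step is then to transcribe the three regimes of the Baik--Ben Arous--P\'ech\'e phase transition under the substitution $\ell = t_1$ and the rescaling by $\sigma^2_v$. For the almost-sure limit (i), the supercritical edge value $\ell\,(1 + c/(\ell-1))$ of the spiked model—valid once $\ell > 1 + c^{1/2}$, due to \cite{baik_silv}—becomes $\sigma^2_v\,\mu_s(t_1,c)$; below the threshold the outlier merges into the bulk and $\hat{\lambda}_1 \convas \sigma^2_v\,\mu_+(c)$, the Marchenko--Pastur edge of Theorem \ref{theo1}. For the fluctuations (ii), the supercritical law (from \cite{baik_benarous_peche}) of dimension $m$ equal to the multiplicity of $t_1$ carries the $N^{1/2}$ rate with centering $\sigma^2_v\mu_s$ and scaling $\sigma^2_v\nu_s$, where $\nu_s(t_1,c)=t_1\sqrt{1-c/(t_1-1)^2}$ is exactly the variance factor appearing in that work; at criticality one obtains the deformed law $\mc{A}_m$ on the $N^{2/3}$ scale, and in the subcritical case the spike is invisible and the ordinary Tracy--Widom law $\mc{W}_2$ of Theorem \ref{theo1} is recovered.

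The main obstacle is not analytic but bookkeeping. The references normalize the noise variance to unity and parametrize everything by the population spikes, so the essential work is to check that the threshold $1 + c^{1/2}$, the centering maps $\mu_s, \mu_+$, and the scaling factors $\nu_s, \nu_+$ all transform consistently under $\ell = t_1$ together with the multiplicative rescaling by $\sigma^2_v$ (the same $\sigma^2_v$ that cancels in the test statistic $T$). A secondary point requiring care is the multiplicity handling: since $\mc{G}_m$ and $\mc{A}_m$ depend on the number $m$ of spikes equal to $t_1$, one must confirm that the \emph{largest} spike governs $\hat{\lambda}_1$ and that coincidences among the $t_p$ are absorbed into the $m$-dependent laws rather than producing separate outliers—so that the statement correctly covers both simple and repeated top spikes.
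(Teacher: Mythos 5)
Your proposal is correct and takes essentially the same approach as the paper, whose proof of this theorem is likewise a citation argument: claim (i) is attributed to Baik and Silverstein \cite{baik_silv}, and claim (ii) to Baik, Ben Arous and P\'ech\'e \cite{baik_benarous_peche} in the Gaussian unit-variance case, generalized to the stated moment assumptions by F\'eral and P\'ech\'e \cite{feral_peche}, with the remaining work being exactly the change of variables and rescaling by $\sigma^2_v$ that you describe. The only detail you omit is that the paper notes the F\'eral--P\'ech\'e generalization rests on the central limit theorems of Bai and Yao \cite{clt}, which does not affect the substance of your argument.
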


\begin{proof}
The proof of claim (i) is due to Baik and Silverstein \cite{baik_silv}; claim (ii) was found by Baik, Ben Arous and P\'ech\'e \  \cite{baik_benarous_peche} under the additional assumption of $z_{ij}$ Gaussian with unit variance, and was generalized into this form by F\'eral, P\'ech\'e \cite{feral_peche} using results from Bai and Yao \cite{clt}.
\end{proof}

\subsection{Interpretation of the results}

\subsubsection{Validity of the assumptions}

All the assumptions ($A_1$)-($A_4$) are verified exactly for the noise part of $\bm Z$, whose entries are complex Gaussian random variables. For the signal part, the first two assumptions are guaranteed by construction of $\bm Z$: ($A_1$) is given by (\ref{prop1}) and ($A_2$) is equivalent to (\ref{prop2}) (provided that the variance of $\bm s$ is equally distributed between real and imaginary part, which is true for all types of complex signals used in communications). Assumption ($A_3$) is also verified in practical cases.

Assumption ($A_4$) is satisfied exactly by Gaussian signals, while there exist several types of signals (e.g. PSK, QAM) whose fourth moment is lower than that of a Gaussian random variable. However, since the type of primary signal is usually unknown to the secondary users, the Gaussian assumption is reasonable in general.
In addition, since $P<K$, most of matrix $\bm Z$ is represented by the noise part which does always satisfy ($A_4$):
therefore the theorem can be applied in almost all practical cases, even when this assumption does not hold exactly. The approximation introduced in this way is small and becomes negligible when the SNR of the primary signal is low, as shown in Sec. \ref{relax}.

\subsubsection{Phase transition phenomenon}

The first important result implied by the theorem is the existence of a \textit{critical value} of $t_1$ that determines whether a signal component is identifiable or not.
This behavior is called \textit{phase transition phenomenon}. % and has an impact on the identifiability of signals.
In fact, when $t_1 \leq 1 + c^{1/2}$, the largest eigenvalue of the covariance matrix converges to the same value as in the pure-noise model, whereas for $t_1 > 1 + c^{1/2}$, it converges to a larger value: $\mu_s(t_1, c) > \mu_+(c)$. This property makes it possible to detect the presence of signals.

In case of $P=1$, the critical value can be expressed directly in terms of the SNR using (\ref{t1rho}):
\beq
\label{rho_crit}
\rho > \frac{1}{\sqrt{KN}}
\eeq
This relation also allows to determine the \textit{minimum number of samples} for the detector to be able to identify signals with a given SNR.

\subsubsection{Limiting distributions}

The second claim of the theorem clarifies \textit{how }the largest eigenvalue converges to the asymptotical limit.
For non-identifiable components, the limiting distribution is the same as in the case of no signal.
For components with eigenvalues placed exactly on the critical point, the limiting distribution is a generalization of the one encountered in the previous case: in fact, for $m=0$, $\mc{A}_0$ reduces to the Tracy-Widom law (Appendix \ref{at_app}).
For components above the critical value, we find the distributions $\mc G_m$: for $m=1$, which is the most common case in practical applications, $\mc G_1$ is simply the normal distribution; for $m=2$, we have derived a simple expression of the CDF of $\mc G_2$ in terms of the Gaussian error function (see Appendix \ref{gue_app}).

Finally, notice that both the events of eigenvalues exactly equal to the critical point and of eigenvalues with multiplicity larger than one are asymptotically \textit{events with zero probability}. The results concerning these cases are mentioned for completeness, but are not important for practical applications. Therefore, the case (\ref{Gm}) with $\mc G_1$ is by far the most important result of this theorem and allows to express $P_{md}$. Furthermore, $\mc G_1$ does not even involve complicated calculations because it reduces to the Gaussian distribution.

%\subsection{Identifiability of signals}
%Identif. conditions, in the case $P=1$ scriverlo anche come funzione di SNR.
%
%Notice that, when $t_1 < 1 + c^{1/2}$, the largest eigenvalue in the spiked population model converges to the same value as in the pure-noise model, whereas for $t_1 > 1 + c^{1/2}$, it converges to a value which is always greater since $\mu_s{t_1, c} > \mu_+(c)$.
%
%This behavior is called \textit{phase transition phenomenon} and has an impact on the identifiability of signals
%
%....

%\subsection{Extension to the non-Gaussian case}
%\label{relax}
%I.e., relaxing the assumption on the 4th moment.
%(F\'eral, P\'ech\'e): \cite{feral_peche}

\subsection{Derivation of $F_{T|\mc H_1}$ and $P_{md}$}

Thanks to the results of Theorem \ref{theo2}, we are now able to express the limiting probability distribution of the test statistic $T$ under the hypothesis $\mc H_1$ and, consequently, to derive an analytical expression for the probability of missed detection. From now on, we refer to the case of \textit{identifiable signals}, i.e., we assume the $P$ signal components produce spiked eigenvalues above the critical limit $1+c^{1/2}$.

The approach that we adopt is the same as in the case of $\mc H_0$: we define again
\beq
L_1 \triangleq N^{1/2} \; \frac{\hat{\lambda}_1 - \sigma^2_v \; \mu_s(t_1,c)}{\sigma^2_v \; \nu_s(t_1,c)}
\eeq
which, for claim (ii), has a limiting PDF
\beq
f_{L_1}(z) \ra f_{\mc{G}_m}(z)
\eeq
where $f_{\mc{G}_m}(\cdot)$ represents the PDF associated with $\mc{G}_m$ ($m$ is the multiplicity of $t_1$), as defined in Appendix \ref{gue_app}.

As for the distribution of smallest eigenvalue,
we introduce the following theorem.

\begin{theorem}
\label{theo3}
\emph{Distribution of the $K-P$ smallest eigenvalues under $\mc H_1$.}
Assume that for $K,N \ra \infty$
\beq
c = \frac{K}{N} \ra \overline c \in (0,1)
\eeq
and that $t_p > 1+c^{1/2}$ for $1 \leq p \leq P$, the eigenvalues $\hat{\lambda}_{P+1} , \hdots, \hat{\lambda}_K$ of $\bm R(N)$ have asymptotically the same limiting distribution as those of a $(K-P) \times (K-P)$ Wishart matrix.
\end{theorem}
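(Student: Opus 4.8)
The plan is to exploit the fact that, once projected onto the orthogonal complement of the signal subspace, $\bm R(N)$ reduces \emph{exactly} to a Wishart matrix of dimension $K-P$; the interlacing theorem then transfers this structure to the $K-P$ smallest eigenvalues of $\bm R(N)$ itself. First I would use that $\mathrm{rank}(\bm H)=P<K$ and pick a $K\times(K-P)$ matrix $\bm U_\perp$ whose orthonormal columns span the orthogonal complement of the column space of $\bm H$, so that $\bm U_\perp^H\bm U_\perp=\bm I_{K-P}$ and $\bm U_\perp^H\bm H=\bm 0_{K-P,P}$. Under $\mc H_1$ we have $\bm Y=\bm{HS}+\bm V$, hence $\bm U_\perp^H\bm Y=\bm U_\perp^H\bm V$, and the compressed matrix
\beq
\label{compress}
\bm B \triangleq \bm U_\perp^H\,\bm R(N)\,\bm U_\perp=\frac1N\,\bm U_\perp^H\bm V\bm V^H\bm U_\perp
\eeq
depends on the noise only. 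Since $\bm V$ has i.i.d.\ CSCG columns with covariance $\sigma_v^2\bm I_K$ and $\bm U_\perp$ is deterministic and isometric, $\bm U_\perp^H\bm V$ is a $(K-P)\times N$ matrix of i.i.d.\ CSCG entries of variance $\sigma_v^2$; thus $\bm B$ is \emph{exactly} a complex Wishart matrix of dimension $K-P$. Consequently its eigenvalues obey Theorem~\ref{theo1} verbatim, with $K$ replaced by $K-P$ and $c$ by $c'\triangleq(K-P)/N$ (and $c'\to\overline c$ since $P$ is fixed); in particular the largest and smallest eigenvalues of $\bm B$ converge a.s.\ to $\sigma_v^2\mu_+(c')$ and $\sigma_v^2\mu_-(c')$ and fluctuate, after centering and $N^{2/3}$ rescaling, according to the Tracy–Widom law $\mc W_2$.

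Next I would invoke the Cauchy interlacing theorem for the principal compression (\ref{compress}) of co-dimension $P$, which gives
\beq
\label{interlace}
\hat\lambda_{i+P}\le \lambda_i(\bm B)\le \hat\lambda_i,\qquad i=1,\dots,K-P,
\eeq
so that the $K-P$ smallest eigenvalues $\hat\lambda_{P+1}\ge\cdots\ge\hat\lambda_K$ of $\bm R(N)$ are sandwiched between consecutive eigenvalues of the $(K-P)$-Wishart matrix $\bm B$. For the interior indices $P+1\le i\le K-P$, (\ref{interlace}) pins $\hat\lambda_i$ between $\lambda_i(\bm B)$ and $\lambda_{i-P}(\bm B)$, i.e.\ between two Wishart eigenvalues only $P$ ranks apart; by the rigidity of the bulk spacing (consecutive Wishart eigenvalues differing by $O(1/N)$) this gap vanishes, so these eigenvalues share the limiting law of $\bm B$. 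In particular, taking $i=K-P$ in (\ref{interlace}) yields the exact one-sided bound $\hat\lambda_K\le\lambda_{K-P}(\bm B)$.

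The hard part will be the matching lower bound on the $P$ smallest eigenvalues $\hat\lambda_{K-P+1},\dots,\hat\lambda_K$ (in particular $\hat\lambda_K$), for which interlacing alone is silent: the additive rank-$P$ term $\bm{HS}$ can a priori depress the smallest singular value, and the residual coupling between the signal and noise blocks of $\bm U_\perp$-rotated $\bm R(N)$, although $o(1)$ almost surely, is only $O(N^{-1/2})$ — coarser than the $N^{-2/3}$ edge-fluctuation scale on which the claim is stated. To close the argument I would appeal to the edge-universality results underlying Theorem~\ref{theo1}(iv) together with the non-outlier eigenvalue analysis of the spiked model \cite{baik_silv,feral_peche,small}: since all spikes are super-critical ($t_p>1+c^{1/2}$), they repel exactly $P$ eigenvalues to the right edge and leave the left edge of the spectrum asymptotically unperturbed, so that $\hat\lambda_K-\lambda_{K-P}(\bm B)=o(N^{-2/3})$ almost surely. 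Combining this with the exact interlacing bound shows that $\hat\lambda_K$, and more generally the whole block $\hat\lambda_{P+1},\dots,\hat\lambda_K$, converges in distribution to the corresponding eigenvalues of the $(K-P)$-dimensional Wishart matrix $\bm B$, which is precisely the assertion of the theorem.
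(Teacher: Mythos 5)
Your architecture is sound and, in fact, close in spirit to the proof the paper relies on: the paper's entire argument for Theorem~\ref{theo3} is a one-line deferral to the proof of Lemma~2 in \cite{kn}, which rests on the same two ingredients you set up, namely compression onto the orthogonal complement of the signal subspace (where, by unitary invariance of the Gaussian noise, the model is \emph{exactly} a $(K-P)$-dimensional Wishart matrix) and interlacing. Your first two stages are correct and nicely explicit. The problem is the closing step. The estimate $\hat\lambda_K - \lambda_{K-P}(\bm B) = o(N^{-2/3})$ a.s.\ is not an available input you may cite: at the $N^{-2/3}$ scale it \emph{is} the theorem, since the smallest eigenvalue fluctuates on exactly that scale. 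None of the works you invoke contains it: \cite{baik_silv} gives only the $O(1)$-resolution almost-sure limit $\hat\lambda_K \convas \sigma^2_v\,\mu_-(\overline c)$ (``the left edge is unperturbed'' only at leading order); \cite{small} applies to unperturbed sample covariance matrices; and \cite{feral_peche} treats the largest eigenvalue. So, as written, the argument is circular at precisely the point where all the difficulty is concentrated. What closes this gap in \cite{kn} is a quantitative perturbation estimate: the signal--noise cross block, of norm $O(N^{-1/2})$ as you correctly note, enters the noise-subspace eigenvalues only at second order, so the induced shift is $O_p(1/N) = o(N^{-2/3})$; some such estimate (or an equivalent) is indispensable and must be stated and proved, not just gestured at.

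Two further remarks. First, an elementary patch is available inside the paper's own formalism, at least for $\hat\lambda_K$ (the only eigenvalue the theorem is actually used for, via the denominator of $T$). From $\bm Y = \bm{TZ}$ one has, for any unit vector $\bm x$, $\bm x^H \bm R(N)\, \bm x = \frac1N \| \bm Z^H \bm T^H \bm x\|^2 \geq \lambda_{\min}\bigl(\frac1N \bm Z \bm Z^H\bigr)\, \bm x^H \bm T \bm T^H \bm x$, and $\lambda_{\min}(\bm T\bm T^H)=1$, hence the sandwich
\beq
\lambda_{\min}\Bigl(\tfrac1N \bm Z \bm Z^H\Bigr) \;\leq\; \hat\lambda_K \;\leq\; \lambda_{K-P}(\bm B).
\eeq
Under the Gaussian signal assumption, $\frac1N \bm Z \bm Z^H$ is \emph{exactly} a $(P+K)$-dimensional complex Wishart matrix, so Theorem~\ref{theo1}(iv) applies to both ends of the sandwich, with aspect ratios $c''=(P+K)/N$ and $c'=(K-P)/N$ whose centerings $\mu_-(\cdot)$ and scalings $\nu_-(\cdot)$ differ by $O(1/N)=o(N^{-2/3})$; a squeeze of the CDFs after a common normalization then yields the $\mc W_2$ claim for $\hat\lambda_K$ with no rigidity input and, notably, without even using $t_p>1+c^{1/2}$. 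Second, your bulk-rigidity step is both far heavier than anything in the paper's toolkit and insufficient near the lower edge: for indices within $O(1)$ of $K$, the spacing $\lambda_i(\bm B)-\lambda_{i+P}(\bm B)$ is itself of order $N^{-2/3}$, the same as the fluctuation scale, so interlacing plus spacing can never pin those eigenvalues to Tracy--Widom accuracy --- one more reason the quantitative coupling bound (or the sandwich above) cannot be avoided.
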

\begin{proof}
The result follows from the proof of Lemma 2 in \cite{kn}.
\end{proof}
Therefore, the distribution of the smallest eigenvalue is not affected by the presence of \vva spikes''
and claims (iii) and (iv) of Theorem \ref{theo1} can be applied also in this case with the only difference that, instead of $c$ (\ref{c}), now
\beq
c' = \frac{K-P}{N}
\eeq
Thus, we define
\beq
L_K \triangleq N^{2/3} \; \frac{\hat{\lambda}_K - \sigma^2_v \; \mu_-(c')}{\sigma^2_v \; \nu_-(c')}
\eeq
which still converges in distribution to the Tracy-Widom law
\beq
f_{L_K}(z) \ra f_{\mc W_2}(z)
\eeq

Then the test statistic $T$ becomes
\beq
T =\frac{\hat{\lambda}_1}{\hat{\lambda}_K} = \frac{ N^{-1/2} \nu_s(t_1,c) L_1 +  \mu_s(t_1,c) }{N^{-2/3}  \nu_-(c') L_K + \mu_-(c')}
\eeq
Also in this case the noise variance $\sigma^2_v$ is canceled out in the ratio. However, an implicit dependence on $\sigma^2_v$ remains in the term $t_1$, except for the case of single primary user ($P=1$) where $t_1$ is a function of the SNR only (\ref{t1rho}).

We denote with $l_1$ and $l_K$, respectively, the numerator and the denominator of $T$ and with $\overline{f}_{l_1}(z)$ and $\overline{f}_{l_K}(z)$ their limiting PDFs for $N, K \ra \infty$.
Through a random variable transformation, they may be expressed as
\begin{align}
\overline{f}_{l_1}(z) = \frac{N^{1/2}}{ \nu_s(t_1,c)} f_{\mc{G}_m}\left( \frac{N^{1/2}}{ \nu_s(t_1,c)} (z-\mu_s(t_1,c)) \right) \\
\overline{f}_{l_K}(z) = \frac{N^{2/3}}{ |\nu_-(c')|} f_{\mc{W}_2}\left( \frac{N^{2/3}}{ |\nu_-(c')|} (\mu_-(c')-z) \right)
\end{align}
Notice that, as a consequence of the observations in \ref{obs}, $\mc G_m$ is with probability one a Gaussian distribution
and thus it can be written in a more practical form as
\beq
\overline{f}_{l_1}(z) = \frac{(N/2\pi)^{1/2}}{ \nu_s(t_1,c)} \exp \left[ -\frac{N}{ 2 \, \nu^2_s(t_1,c)} \left(z-\mu_s(t_1,c)\right)^2 \right]
\eeq

Also in this      case, we assume $f_{l_1}(l_1)$ and $f_{l_K}(l_K)$ as asymptotically independent. The resulting limiting ratio distributions is
\begin{align}
%f_{T|\mc H_0}(t) \ra \left\{\begin{array}{ll}
%\int_{-\infty}^{+\infty} |x| f_{l_1}(tx) f_{l_K}(x) dx & t>1 \\
%0                                                              & \mbox{otherwise}
%\end{array}\right.
\overline{f}_{T|\mc H_1}(t) & = \left[ \int_{-\infty}^{+\infty} |x| \overline{f}_{l_1,l_K}(tx,x) dx  \right] \cdot {I}_{\{t>1\}} \nonumber \\
& = \left[ \int_{0}^{+\infty} x \overline{f}_{l_1}(tx) \overline{f}_{l_K}(x) dx  \right] \cdot {I}_{\{t>1\}}
\label{pdfT1}
\end{align}
where, like in the previous case,
the domain of integration has been restricted to non-negative values, and the condition $t>1$ is necessary to ensure that $l_1>l_K$.

Finally, denoting with $\overline{F}_{T|\mc H_1}(\gamma)$ the CDF corresponding to the PDF in (\ref{pdfT1}),
we can take the approximation
\beq
F_{T|\mc H_1}(\gamma) \approx \overline{F}_{T|\mc H_1}(\gamma)
\eeq
that, in the asymptotical limit for $N$ and $K$, is the expression of the missed detection probability as it is given by (\ref{Pmd1}).
Numerical results show that the approximation is quite accurate for all cases of practical interest.

The relation between $P_{md}$ and $\gamma$ allows to predict the missed-detection probability of the detector with a given threshold, or to express the decision threshold as a function of the required probability of missed detection. The problem of setting the threshold is discussed in more detail in the next section.

\section{Setting the decision threshold}
\label{discussion}

The results presented in the previous sections express $P_{fa}$ and $P_{md}$ as a function of $\gamma$; therefore, by inverting the relations (\ref{Pfa1}) and (\ref{Pmd1}), the threshold can be expressed as a function of the error probabilities.

\subsection{Threshold as a function of $P_{fa}$}

The first relation
\beq
\label{gammapfa}
\gamma(P_{fa}) = F_{T|\mc H_0}^{-1} (1- P_{fa})
\eeq
allows to set the decision threshold accurately even if the noise power ($\sigma^2_v$) is unknown, since $F_{T|\mc H_0}$ depends only on the number of receivers ($K$) and of samples ($N$). The threshold set in this way, as a function of a target $P_{fa}$, is therefore a \vva blind' decision scheme as it is insensitive both to the noise and to the signal power.

In a previous work, Zeng and Liang \cite{liang} proposed a similar approach to set the decision threshold as a function of the probability of false alarm. Their detection algorithm was based on an approximated distribution of $T$, calculated taking into account only the limiting distribution of the largest eigenvalue (Theorem \ref{theo1}(ii)), and therefore provides non-optimal detection performance.
In \cite{debbah} another eigenvalue-based detection scheme was proposed, based only on the asymptotical values of $\hat{\lambda}_1$ and $\hat{\lambda}_K$ (Theorem \ref{theo1}(i)(iii)). For this reason, it does not allow to adjust the threshold as a function of $P_{fa}$ and is strongly sub-optimal with respect to our scheme unless $N$ and $K$ are extremely large.

A detailed performance comparison between the threshold based on the limiting distribution $ F_{T|\mc H_0}$ and these two previous approaches was provided in \cite{commlett}.

\subsection{Threshold as a function of $P_{md}$}

The second relation is
\beq
\gamma(P_{md}) = F_{T|\mc H_1}^{-1} (1- P_{md})
\eeq

Whereas $\gamma(P_{fa})$ has been found to depend only on $K$ and $N$, the expression of $\gamma(P_{md})$ depends also on the characteristics of the signal to be detected. In particular, two cases have to be considered separately:
\begin{itemize}
	\item when $P=1$, the only additional parameter needed to compute $P_{md}$ is the SNR $\rho$. In this case, the detector may still be defined \vva blind'' since it does not need to know explicitly the noise power nor the signal power. (Clearly, the detection performance has to be related, at least, with the SNR. For instance, in the case of Energy Detection, the SNR \textit{and} the noise power are needed to compute $P_{md}$.)
	\item when $P>1$, the knowledge of additional parameters is needed, namely the noise power ($\sigma^2_v$), the number of primary users ($P$), their powers ($\sigma^2_1, \hdots, \sigma^2_P$), and the channel ($\bm H$). These dependences arise from the nonlinear expression of $t_1$ (\ref{poly2}).
%In this case, the distribution of $T$ does depend on the system paramters, namely the noise power ($\sigma^2_v$), the number of primary users ($P$), their powers ($\sigma^2_1, \hdots, \sigma^2_P$), and the channel ($\bm H$), besides $N$ and $K$. This derives from ([formula finale F T|H1]), (\ref{lambdai}) and (\ref{sP}). When $P=1$, the relation is simplified by (\ref{t1rho}) and may be expressed just as a function of the SNR ($\rho$) and the noise power. In any case, the decision criterion as a function of $P_{md}$ is not blind, as the detection performance necessarily depends on the signal to be detected.
\end{itemize}

In general, all these parameters (even the SNR and the potential number of primary users) might be unknown.
Therefore, the relation between $\gamma$ and $P_{md}$ should better be used in the forward way, to predict the $P_{md}$ achieved using a given threshold under the possible primary signal scenarios, rather than to set the decision threshold according to a target $P_{md}$.
Nevertheless, if the system imposes a certain requirement on $P_{md}$ to keep the interference caused by the secondary network below a maximum level, the formula is useful to determine $\gamma$ based on the worst-case scenario (i.e., the one with the highest missed-detection probability) so as to guarantee in all cases the required protection to the primary network.

\subsection{Complexity and practical implementation}

As shown in \cite{liang} and \cite{commlett}, eigenvalue-based detection schemes offer a substantial performance improvement compared to ED (and a complete protection to noise uncertainty) at the price of an increased complexity.
Most of the computational complexity of these algorithms derives from the computation of the covariance matrix and of its eigenvalues: in \cite{liang} it is estimated that such operations lead to a complexity that grows as $K^3$, whereas in the case of ED it grows linearly with $K$. This increased computational cost is not dramatic, since the number of receivers is never enormous. On the other hand, in terms of the sample number (which is, actually, very large) the complexity remains linear with $N$ for both EBD and ED.

However, it is important to remark that the computational complexity is not influenced by the computation of the threshold. Even if the formulae found in this paper to express the threshold are very complex, they are always implemented off-line, and what the detector uses is simply a look-up table (LUT) containing several values of $\gamma$ as a function of $N$, $K$ $P_{fa}$, and/or $P_{md}$ and SNR. The use of LUTs also allows to change the decision threshold \vva on the fly'', in case of modifications of the system requirements.

Finally, for the computation of the distribution functions defined in this paper, routines are available on the web (e.g., \cite{tw_num} for the Tracy-Widom distributions) or can be implemented directly from the definitions given in the Appendices.

\section{Numerical results}
\label{results}

In this section, the results derived analytically in the previous sections are validated by comparing them with empirical results, obtained from Matlab Monte-Carlo simulations. The parameters used in the simulations are described in each sub-section; when referring to the SNR, it is defined according to (\ref{snr}).

\begin{figure}[p]
\hspace{-2mm}
	\centering
		\includegraphics[width=0.49\textwidth]{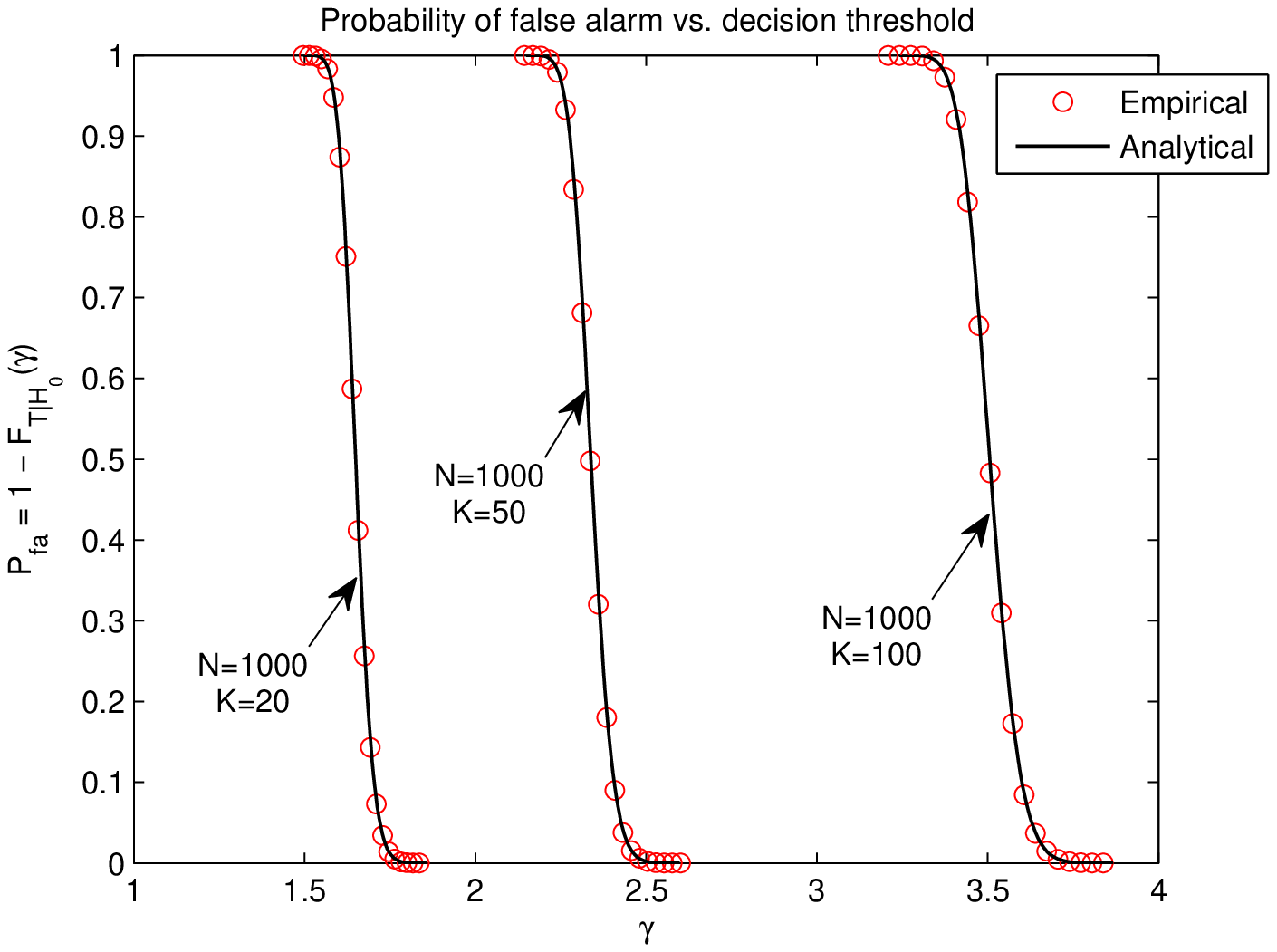}
	\caption{False-alarm probability: empirical vs. analytical.}
	\label{fth0}
\end{figure}

\begin{figure}[p]
\hspace{-2mm}
	\centering
		\includegraphics[width=0.49\textwidth]{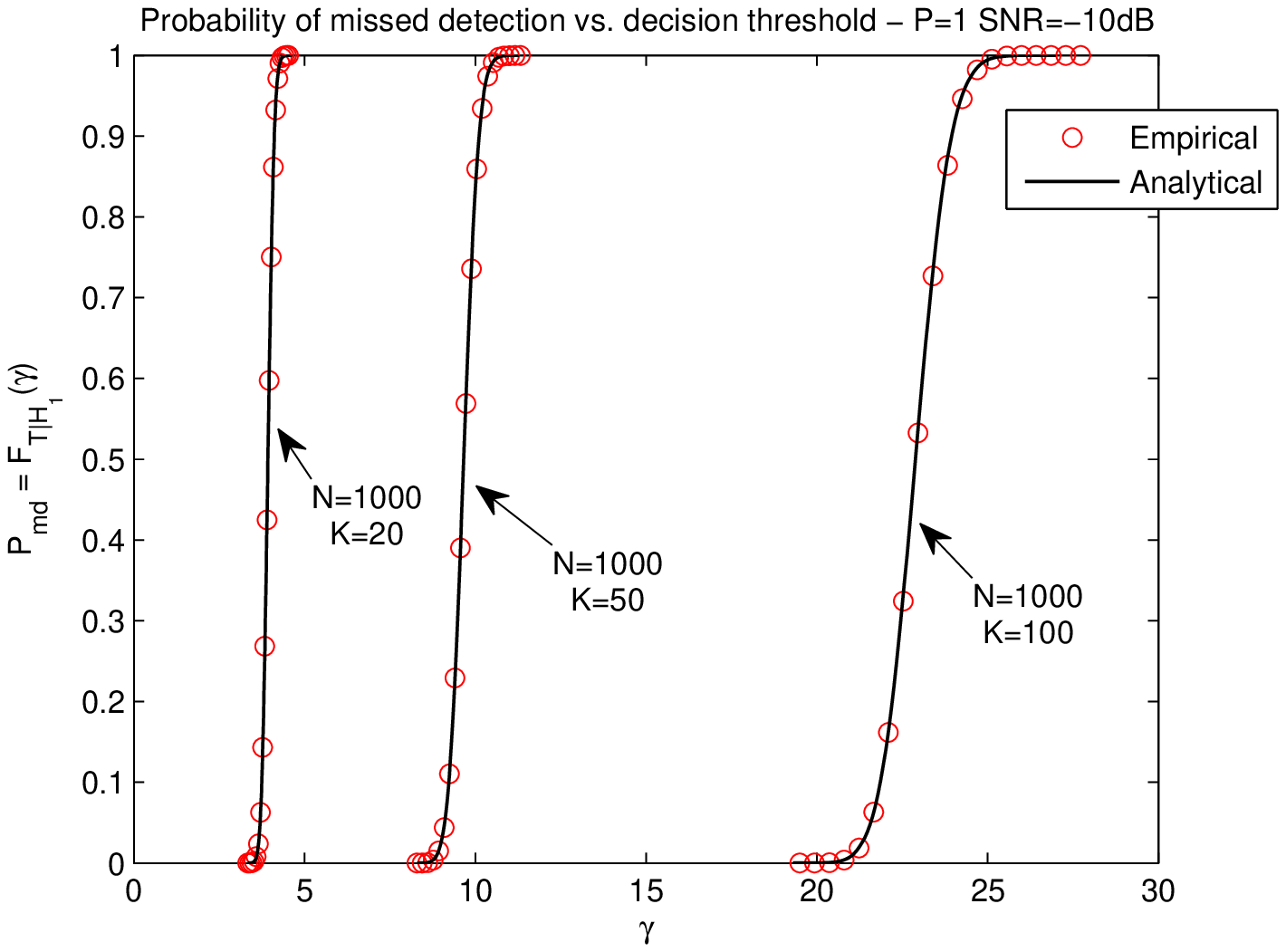}
	\caption{Missed-detection probability: empirical vs. analytical. $P=1$, $\rho=-10$dB.}
	\label{fth1p1snr-10}
\end{figure}

\begin{figure}[p]
\hspace{-2mm}
	\centering
		\includegraphics[width=0.49\textwidth]{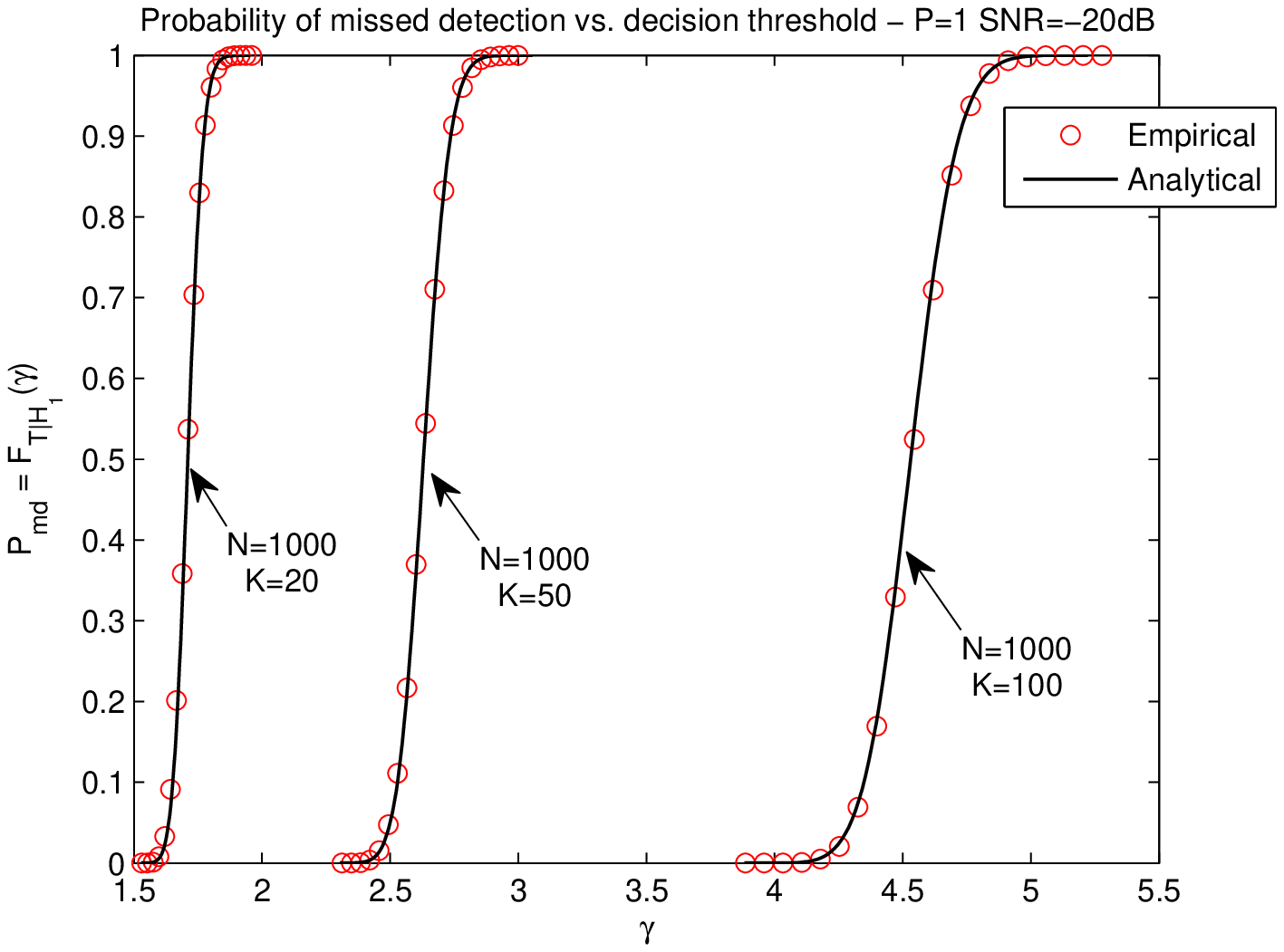}
	\caption{Missed-detection probability: empirical vs. analytical. $P=1$, $\rho=-20$dB.}
	\label{fth1p1snr-20}
\end{figure}

\begin{figure}[p]
\hspace{-2mm}
	\centering
		\includegraphics[width=0.49\textwidth]{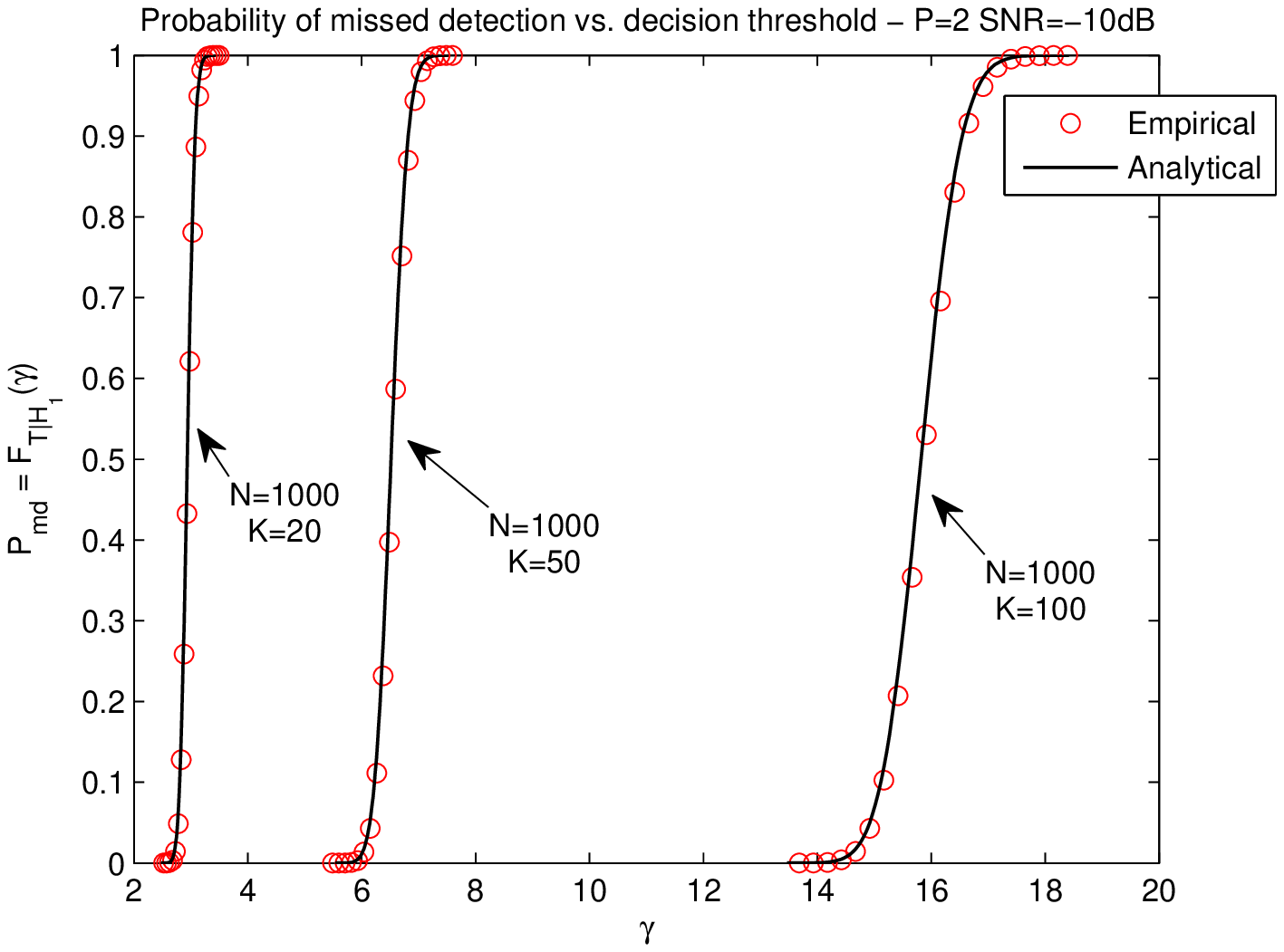}
	\caption{Missed-detection probability: empirical vs. analytical. $P=2$, $\rho=-10$ dB ($\rho_1 = 0.06 \approx -12.2$ dB, $\rho_2 = 0.04 \approx -14.0$dB).}
	\label{fth1p2snr-10}
\end{figure}

\subsection{Distribution of $T$ under $\mc H_0$}
Figure \ref{fth0} represents the probability of false alarm, i.e., the complementary CDF of $T$ under $\mc H_0$, for $N=1000$ and different values of $K$ (i.e., of $c$). The value of $\sigma^2_v$ has no effect, as it gets canceled out in the test statistic.

The curve predicted using the analytical expression turns out to be consistent with the empirical data in all the considered cases.
Comparing the three curves obtained with different values of $K$, one may observe that for a given $\gamma$ the probability of false alarm increases with $K$. However, this does not mean that the detector performance worsens for larger $K$, because also the curve of $P_{md}$ shifts rightwards, and consequently the decision threshold. The overall effect is indeed an improvement of performance when $K$ gets larger, as expected intuitively.

\subsection{Distribution of $T$ under $\mc H_1$}
Figures \ref{fth1p1snr-10}, \ref{fth1p1snr-20} and \ref{fth1p2snr-10} show the probability of missed detection, i.e., the CDF of $T$ under $\mc H_1$, for the same values of $N$ and $K$ as in the previous case.

The entries of $\bm{H}$ are taken as zero-mean complex Gaussian random coefficients (Rayleigh fading), with a variance normalized so as to obtain the desired SNR.
In the first figure the SNR is $-10$ dB with $P=1$ primary signal; in the second one, the SNR is $-20$ dB again with $P=1$; in the third one, $P=2$ with a global SNR of $-10$ dB (from (\ref{snrP}) with: $\rho_1 = \frac{\sigma^2_1 \| \bm{h}_1 \|^2}{K \sigma^2_v} = 0.06 \approx -12.2$ dB; $\rho_2 = \frac{\sigma^2_2 \| \bm{h}_2 \|^2}{K \sigma^2_v} = 0.04 \approx -14.0$ dB; $\sigma^2_v = 1$).
Notice that in the last case ($P>1$) the largest spike eigenvalue $t_1$, which determines $P_{md}$, depends on all the entries of $\bm H$ and not only on the SNR. In our simulations $t_1 = \{2.25, 4.04, 7.60 \}$, respectively for $K= \{20, 50, 100\}$.

Also in this case, the analytical curves fit the empirical data well in all the considered cases. We have considered low values of SNR, since the low-SNR region is the most important both from the theoretical point of view ($t_1$ close to the critical value of identifiability) and from the practical point of view (the challenge for cognitive radios is to detect signals also in presence of fading or shadowing).

As previously mentioned, the curves of $P_{md}$ shift rightwards as $K$ increases, i.e., the missed-detection probability gets lower for a given $\gamma$. This fact compensates the increase of $P_{fa}$ resulting in a larger separation between $1-F_{T|\mc H_0}$ and $F_{T|\mc H_1}$ for larger $K$.

\begin{figure}[p]
\hspace{-2mm}
	\centering
		\includegraphics[width=0.49\textwidth]{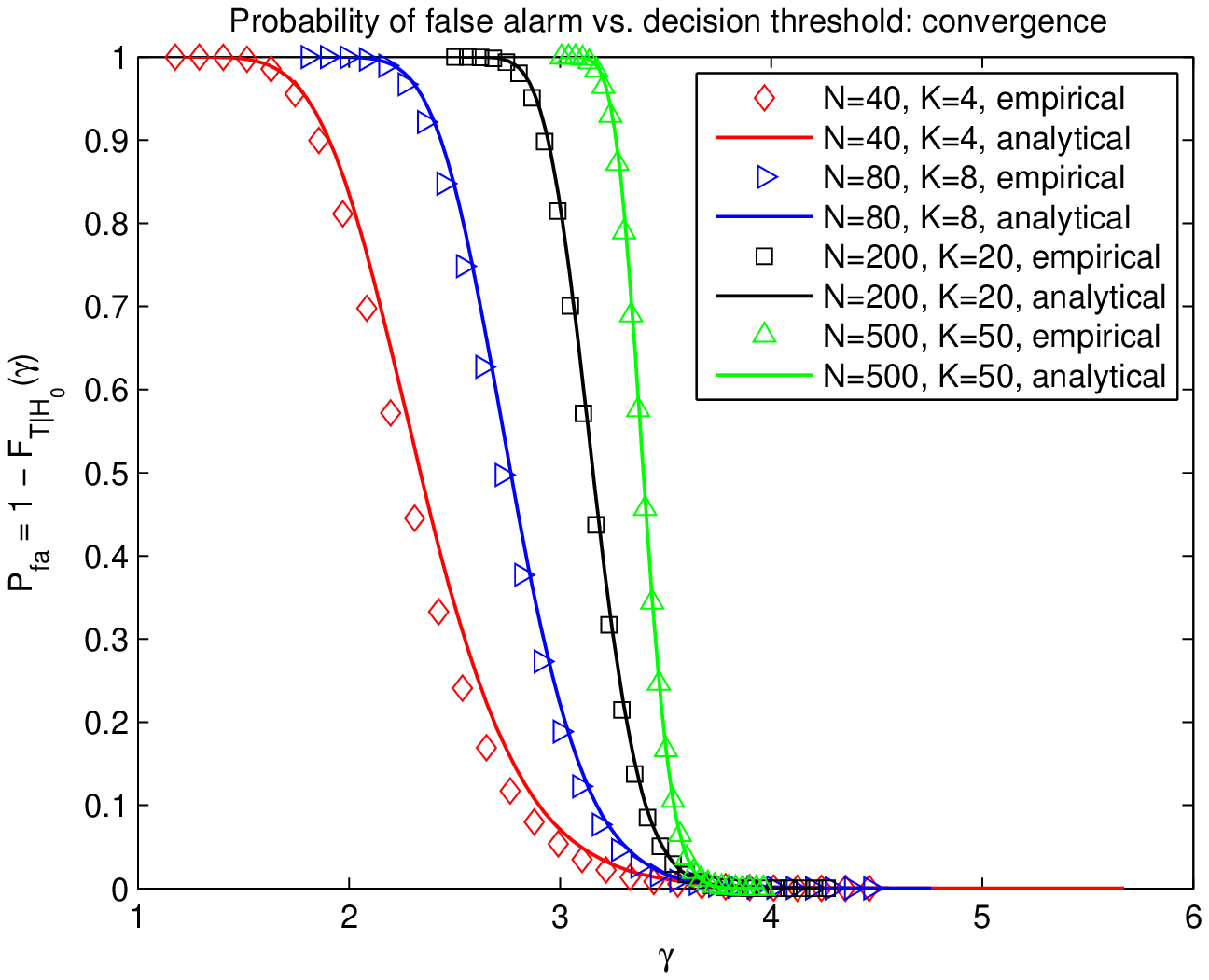}
	\caption{False-alarm probability: convergence, for a fixed $c=K/N=0.1$.}
	\label{convh0}
\end{figure}

\begin{figure}[p]
\hspace{-2mm}
	\centering
		\includegraphics[width=0.49\textwidth]{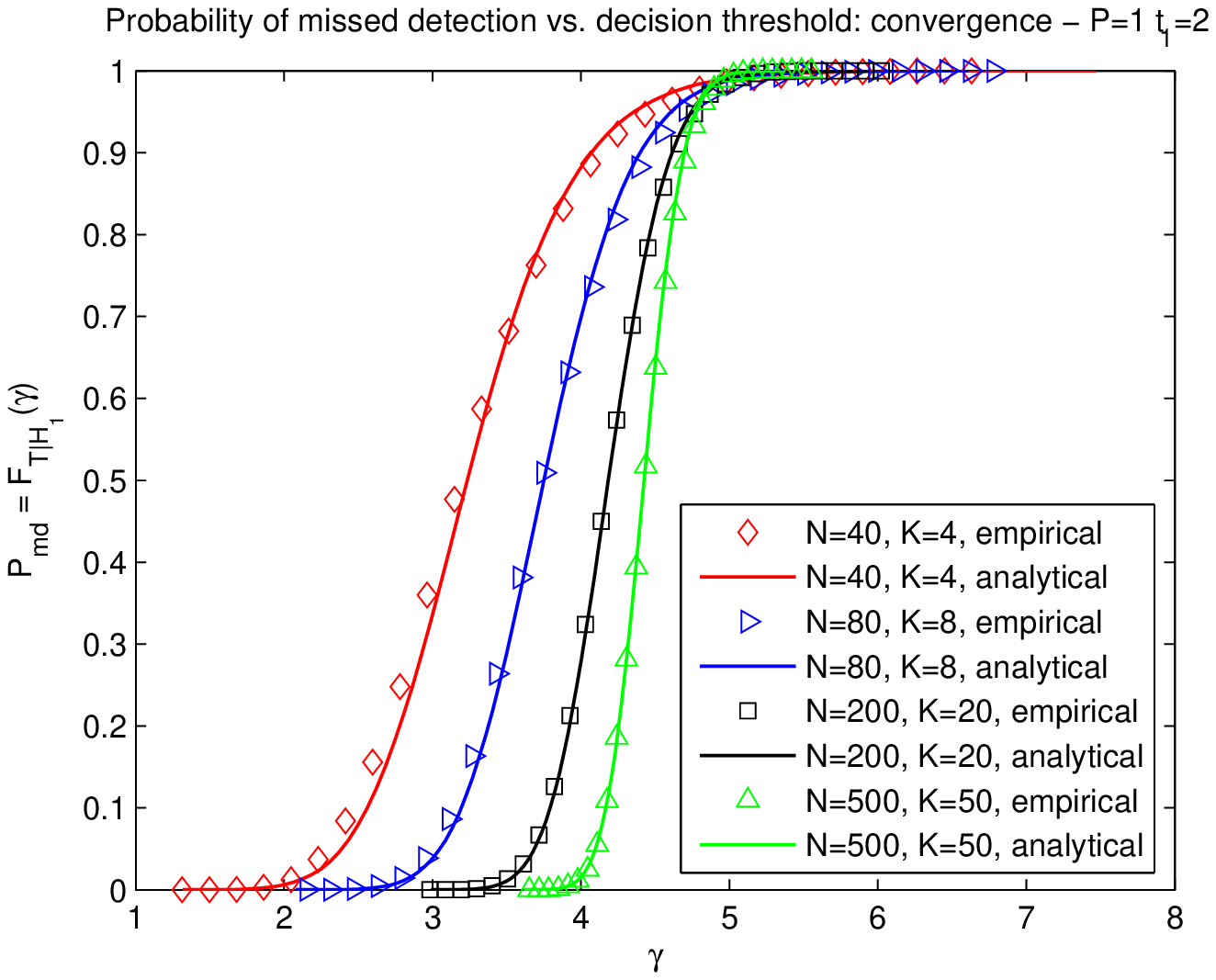}
	\caption{Missed-detection probability: convergence, for a fixed $c=K/N=0.1$. $P=1$, $t_1 = 2$.}
	\label{convh1}
\end{figure}

\subsection{Convergence}

Figures \ref{convh0} and \ref{convh1} show the convergence of the empirical CDFs to the analytical CDFs, which are calculated under asymptotical assumptions for $N$ and $K$. Four different couples of $\{N,K\}$ have been considered while keeping their ratio $c$ fixed at $0.1$.
Remarkably, even though the CDFs are asymptotical they provide an accurate approximation of the empirical CDFs also for low $K$ and $N$.

In the case $\mc H_0$, as $N$ and $K$ increase the CDF tends to a step function, because the largest and the smallest eigenvalues converge (almost surely) to the values $\mu_+(c)$ and $\mu_-(c)$, respectively; the variance instead depends also on $N$ (it gets smaller for larger $N$).

For the case $\mc H_1$, we considered a scenario with $P=1$ and, to make the comparison more evident, we kept $t_1$ fixed instead of the SNR ($\rho$ and $t_1$ are linked by a factor $K$, so they can not remain both constant with different $K$). In particular we chose the value $t_1=2$, which is above the critical value that is $1+\sqrt{c}=1.3162$ for all the considered couples of $\{N,K\}$. Similarly as in the previous case, the CDFs turn out to converge to a step function corresponding to the almost sure asymptotical limits of the eigenvalues.

\begin{figure}[p]
\hspace{-2mm}
	\centering
		\includegraphics[width=0.49\textwidth]{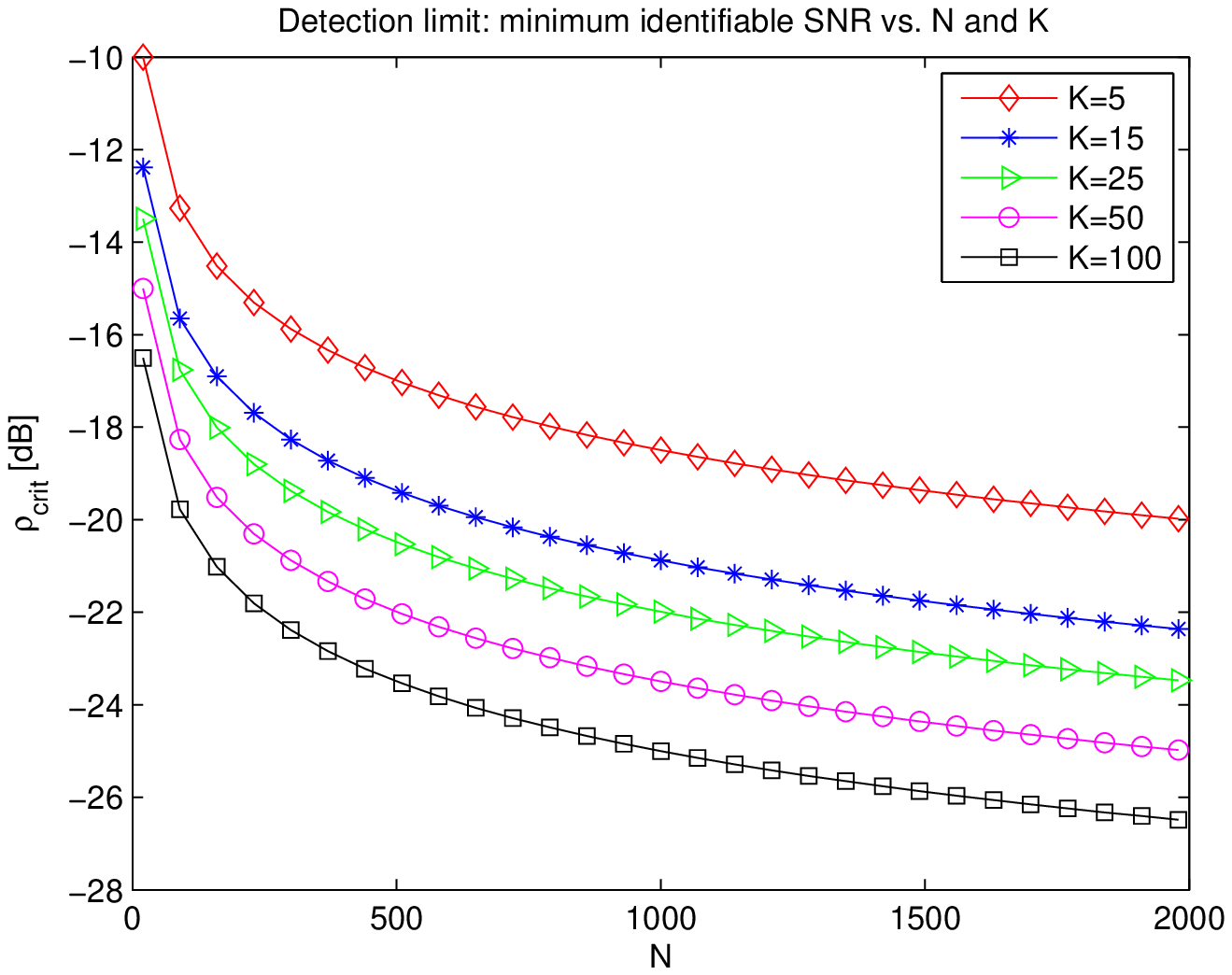}
	\caption{Phase transition phenomenon: minimum identifiable SNR vs. $N$ and $K$.}
	\label{ident}
\end{figure}

\begin{figure}[p]
\hspace{-2mm}
	\centering
		\includegraphics[width=0.49\textwidth]{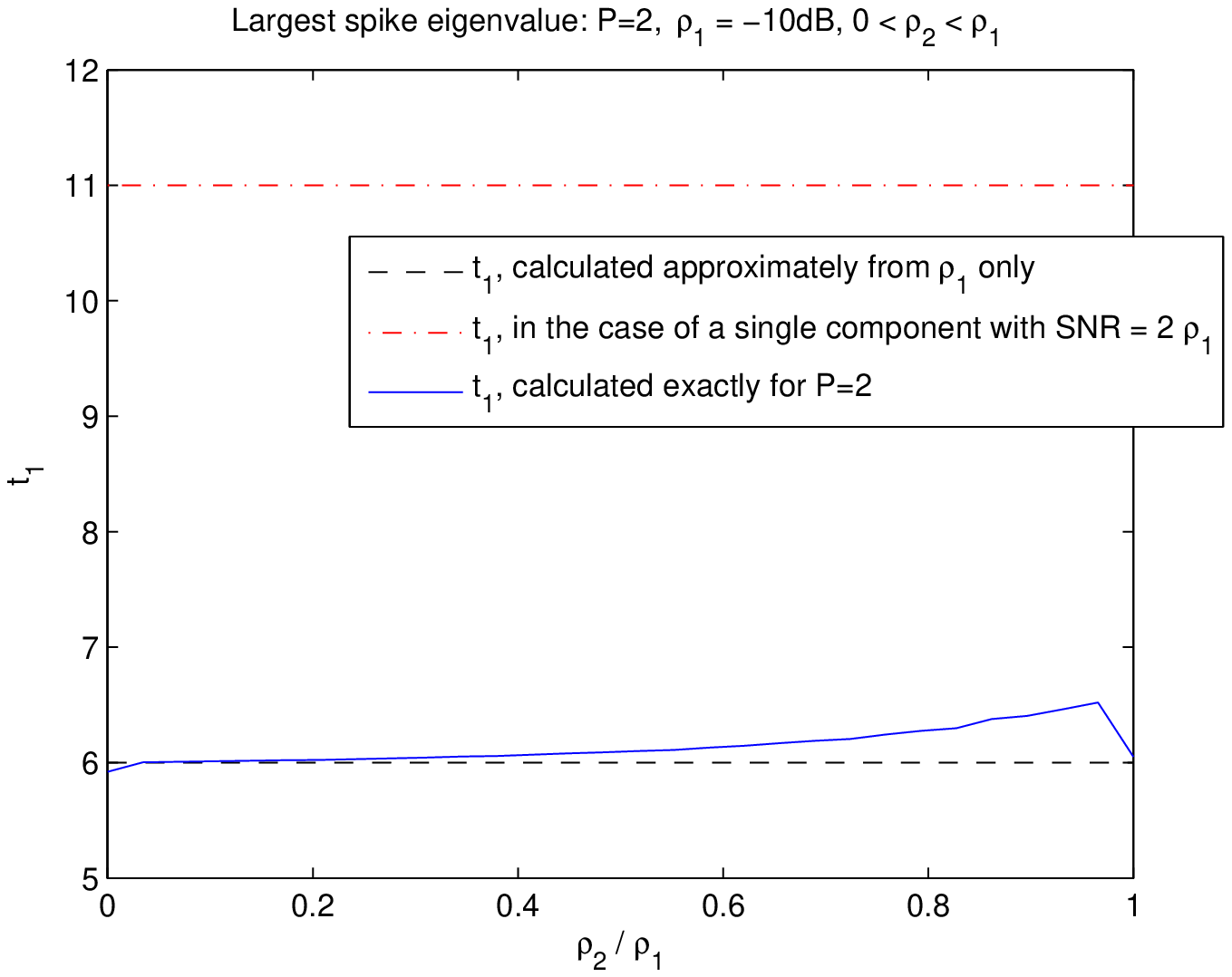}
	\caption{Impact of the approximated SNR formula ($\rho \approx \rho_1$) on the calculation of the largest spike eigenvalue ($t_1$).}
	\label{nuova}
\end{figure}

\subsection{Identifiability}
As a result of the phase transition phenomenon of Theorem \ref{theo2}, signals below a certain power level are not identifiable. A detection limit as a function of the SNR is expressed by the relation (\ref{rho_crit}), valid for $P=1$.
Figure \ref{ident} represents graphically the critical SNR for detection as a function of the number of samples $N$ and of receivers $K$. The relation may be used to determine the minimum sensing duration (i.e., the minimum number of samples) needed to detect signals for a required detector sensitivity.

A relation between identifiability threshold and SNR is valid only for $P=1$. For multiple signals, the expression of $t_1$ is more complex and does not depend only on the SNR.
However, it turns out that also for $P>1$ the value of $t_1$ is determined essentially by the power of the largest signal, i.e., by the SNR as if the first component was alone.
Therefore, we may define an approximated expression of the SNR, similar as (\ref{snr1}), depending only on the power of the dominant signal component:
\beq
\label{rho_app}
\rho \approx \frac{\max_p \left( \sigma^2_p \| \bm h_p\|^2 \right) }{K \sigma^2_v}
\eeq
This expression can be used in (\ref{rho_crit}) to determine, approximately, the parameters $N$ and $K$ of the detector.
As an example, in figure \ref{nuova} we consider the case $P=2$ with $\rho_1$ fixed at $0.1 = -10$ dB and $\rho_2$ varying from $0$ and $\rho_1$. The graph shows $t_1$ as a function of $\rho_2$, comparing the case when $t_1$ is calculated from the exact formula for $P=2$ (\ref{poly1}) with the case when it is calculated taking into account the largest component only (\ref{rho_app}) and with the case of a single component, but with double power (SNR $= 2 \rho_1$). It turns out that the actual value of $t_1$ is very close to the approximated one, even when the sum of $\rho_1$ and $\rho_2$ is close to $2 \rho_1$. Furthermore, the approximated $t_1$ tends to underestimate the actual $t_1$, resulting in a conservative choice of $N$ and $K$.

\begin{figure}[p]
\hspace{-2mm}
	\centering
		\includegraphics[width=0.49\textwidth]{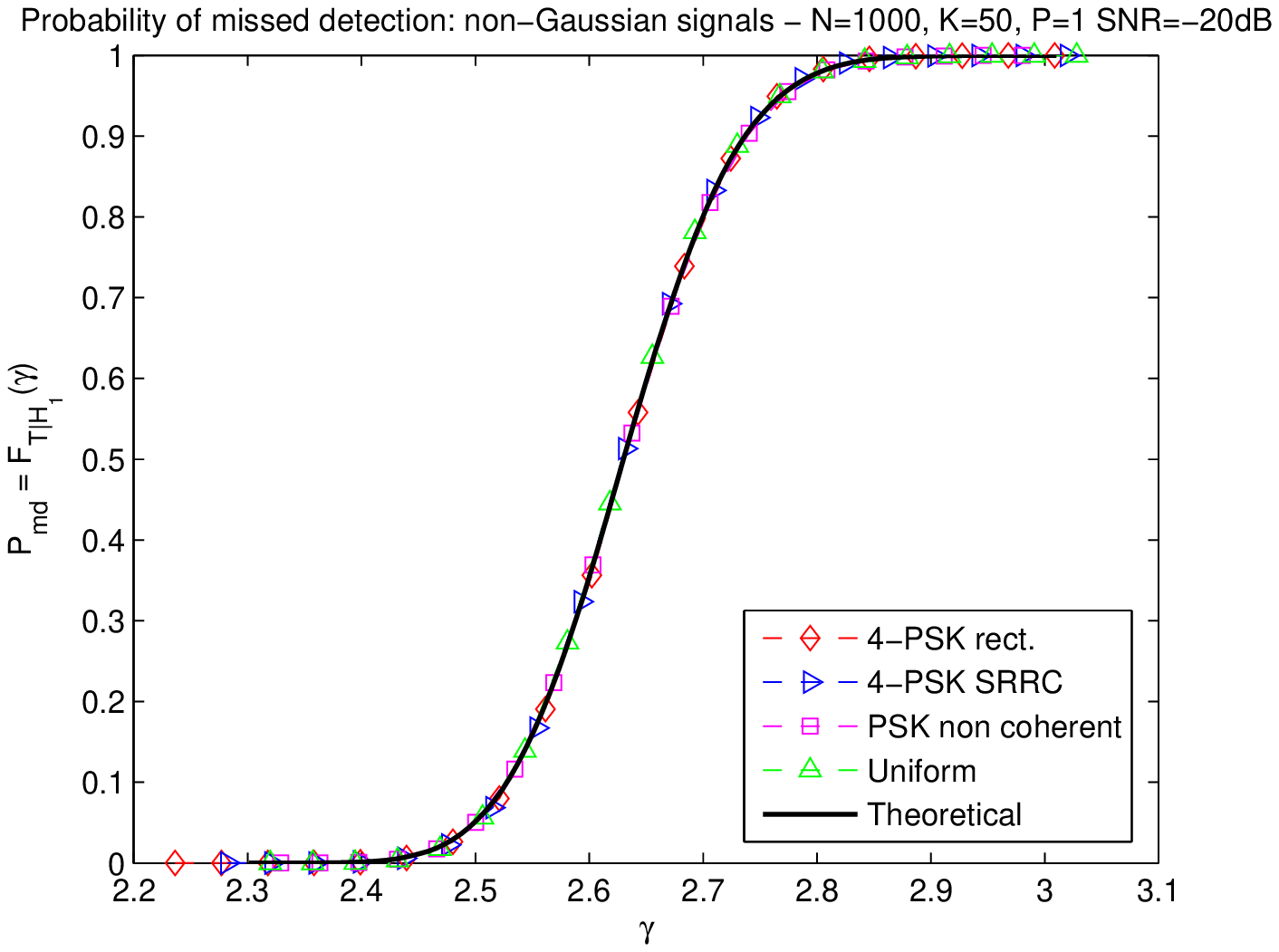}
	\caption{Missed-detection probability: empirical vs. analytical with non-Gaussian signals.. $P=1$, $\rho=-20$dB.}
	\label{types-20}
\end{figure}

\begin{figure}[p]
\hspace{-2mm}
	\centering
		\includegraphics[width=0.49\textwidth]{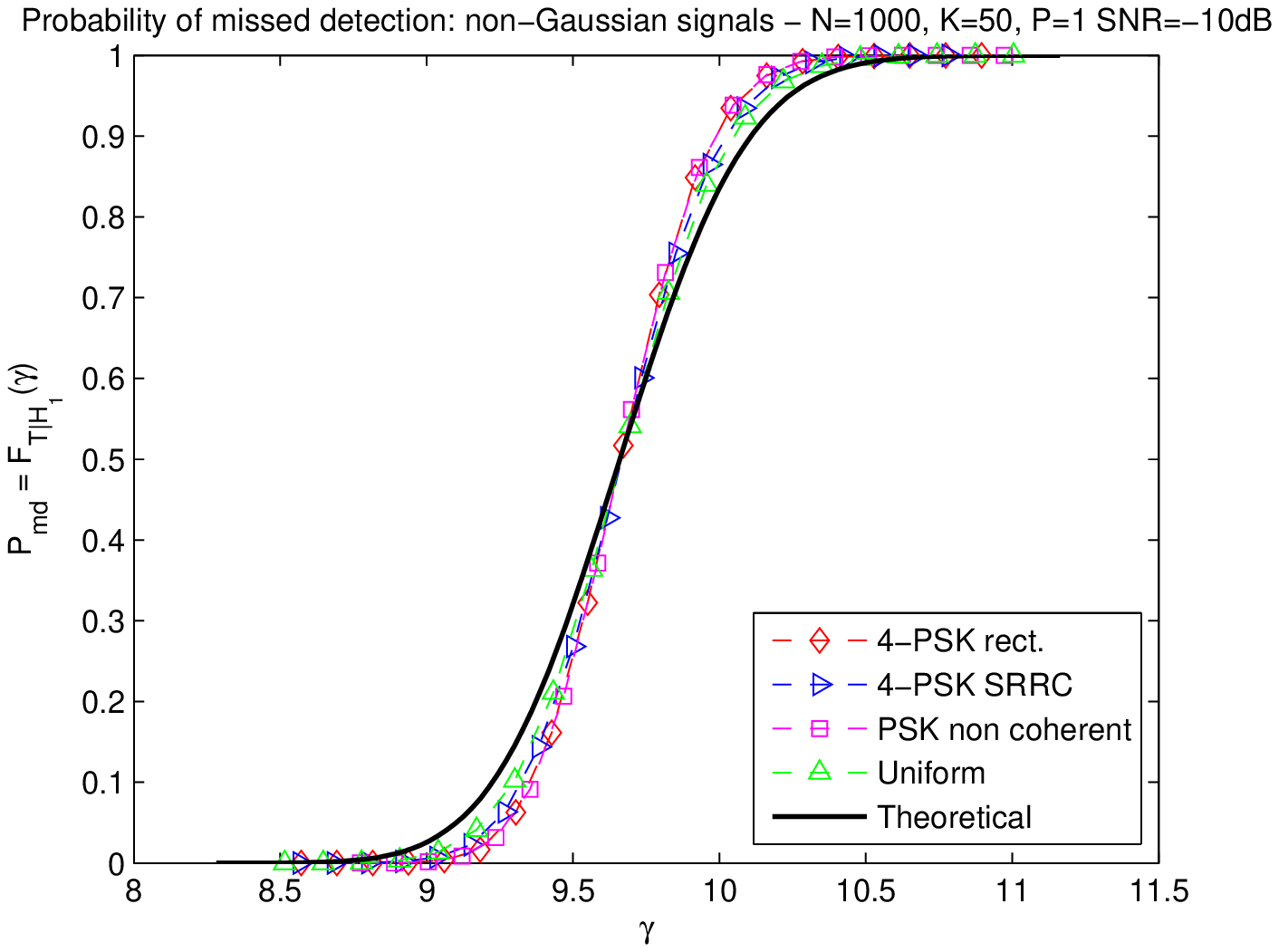}
	\caption{Missed-detection probability: empirical vs. analytical with non-Gaussian signals.. $P=1$, $\rho=-10$dB.}
	\label{types-10}
\end{figure}

\subsection{Non-Gaussian signals}
\label{relax}
As pointed out in Sec. \ref{obs}, the last assumption of Thereom \ref{theo2} is often not satisfied in practice, since realistic signals have typically a fourth moment lower than that of a Gaussian random variable.
Figures \ref{types-20} and \ref{types-10} show how the theoretical results, which rely on that assumption, fit empirical data obtained using more realistic types of primary signal. We considered four different types of signals, all with the same variance as in the Gaussian case, but with different fourth moments. The first curve refers to a 4-PSK modulated primary signal, with ideal rectangular pulse-shape filter and assuming a coherent reception; in the second curve, the signal is the same but passed through a square root raised cosine (SRRC) filter with roll-off $\alpha=0.5$; the third curve is a PSK signal with non-coherent reception (i.e., each sample has a random phase); the last curve refers to a random complex signal whose real and imaginary parts are uniformly distributed.

In the first figure, when the SNR is very low ($-20$ dB), the theoretical distribution fits the empirical data perfectly in spite of the fourth moment of the signals. When the SNR increases ($-10$ dB), some difference between the theoretical and the empirical curve can be observed, especially for PSK signals. It is interesting to notice that the Gaussian approximation on the fourth moment \textit{affects the variance of the resulting distribution, but not the mean}. The result is that the analytical formula overestimates the probability of missed detection (the interesting part of the curve is for $P_{md}<0.5$, i.e., the left tail).

To obtain a more accurate estimation of the missed-detection probability in case of non-Gaussian signals, for high SNR, one should add a \vva correction coefficient'' to the theoretical variance $\nu_s(t_1,c)$. Such coefficients would depend on the fourth moment of the signals, $\sigma^4_p$, and would be therefore specific of the modulation used.
It might be possible to determine by simulation the correction coefficients for a particular signal as a function of the SNR, whereas determining them analytically is a more challenging task since the matrix $\bm Z$ is composed of heterogeneous entries. However, the Gaussian assumption is valid asymptotically for $\rho \ra -\infty$ (the signal part in $\bm Z$ becomes negligible) and is accurate enough in the low-SNR region as shown by figure \ref{types-20}.

\begin{figure}[btp]
\hspace{-2mm}
	\centering
		\includegraphics[width=0.49\textwidth]{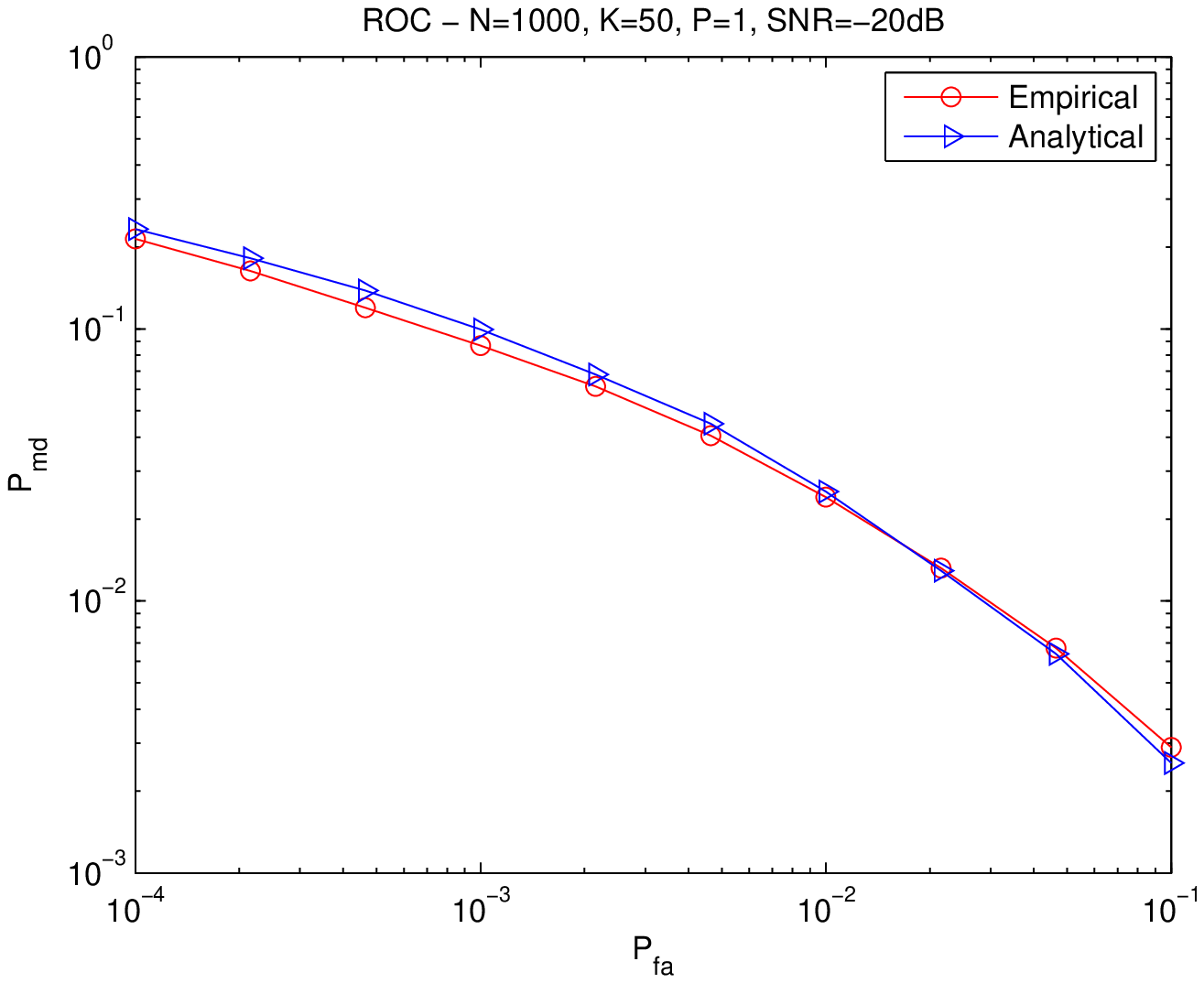}
	\caption{Complementary ROC: analytical vs. empirical. $P=1$, $\rho = -20$ dB.}
	\label{rocP1-20}
\end{figure}

\begin{figure}[btp]
\hspace{-2mm}
	\centering
		\includegraphics[width=0.49\textwidth]{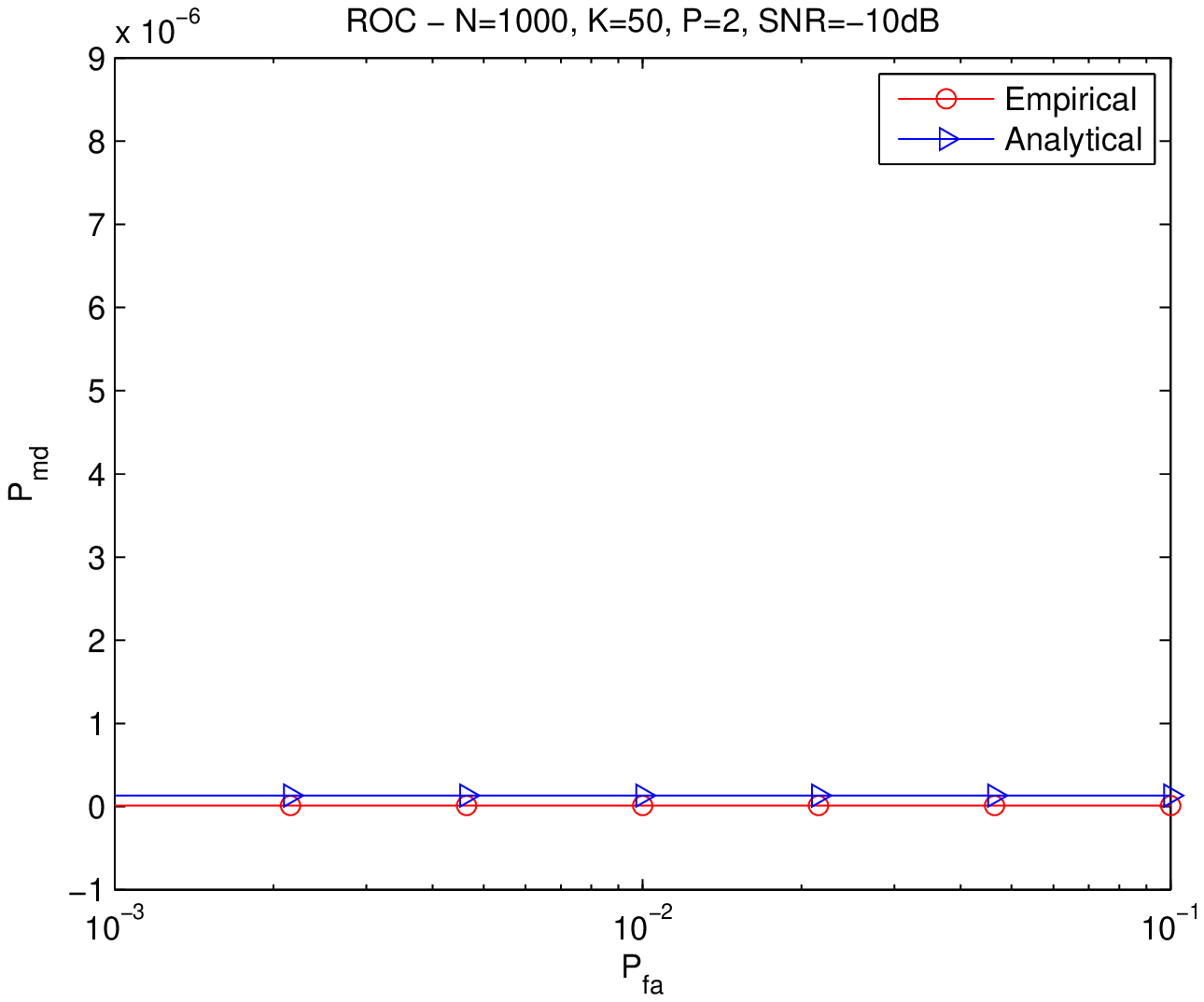}
	\caption{Complementary ROC: analytical vs. empirical. $P=2$, $\rho = -10$ dB.}
	\label{rocP2-10}
\end{figure}

\subsection{Receiver operating characteristics (ROC)}
Figures \ref{rocP1-20} and \ref{rocP2-10} represent the performance of the eigenvalue-based detector in the form of complementary ROC (receiver operating characteristics), i.e., $P_{md}$ as a function of the target $P_{fa}$. The curves are plotted by setting the threshold as a function of the false-alarm probability and deriving the corresponding missed-detection probability for that threshold. The graphs compare the curves obtained from the empirical distributions with those obtained using the analytical expressions of this paper: (\ref{gammapfa}) to set $\gamma(P_{fa})$, then (\ref{Pmd1}) to compute $P_{md}(\gamma)$.

The first ROC graph refers to the same scenario as figures \ref{fth0} (for $P_{fa}$) and \ref{fth1p1snr-20} (for $P_{md}$), with $N=1000$ and $K=50$; the second one refers to the scenario of figure \ref{fth1p2snr-10} (for $P_{md}$) with the same values of $N$ and $K$.

The overall detector performance expressed by the ROC improves as the separation between the $P_{fa}$ curve (monotonically decreasing) and the $P_{md}$ curve (monotonically increasing) gets larger, thus letting both $P_{fa}$ and $P_{md}$ be nearly zero for a wide range of $\gamma$. Such distance increases with $K$, $N$ and with the SNR. For this reason, in the second ROC the performance is almost ideal (zero $P_{md}$ for all the $P_{fa}$). In the first ROC on the contrary there are finite missed-detection probabilities for the considered range of $P_{fa}$; the analytical result also in this case turns out to be consistent with the empirical data.

\section{Conclusion}
\label{concl}
In this paper, analytical formulae have been found for the limiting distribution of 
%the largest and 
the ratio between the largest and the smallest eigenvalue in sample covariance matrices, either constructed from pure-noise (Wishart) models or signal-and-noise (spiked population) models. These results have been applied to the problem of signal detection (in particular, in the context of Cognitive Radio), where eigenvalue-based detection has proved to be an efficient technique.

%The main contribution provided by our analysis is the analytical expression of the error probabilities %(false alarm and missed detection) in a scenario of multiple primary signals, thus making possible a %theoretical performance analysis and design of the eigenvalue-based detector.

Among the main results of the paper, there are the analytical formulation of the missed detection probability as a function of the threshold, and
     the derivation and discussion of signal identifiability conditions.
     %(and discussion
     %on the possibility of expressing it  as a function of the SNR.
     All the results have been validated via numerical simulations covering false-alarm  and missed-detection vs. threshold, convergence behavior, identifiability for single and multiple primary users as a function of the SNR,  validity of the approach for realistic modulated signals, ROC curves.

\appendix
%{Definition of relevant distribution functions}

\subsection{Tracy-Widom distribution}
\label{tw_app}
The Tracy-Widom distributions $\mc W_2$ were introduced in \cite{tw}, to express the distribution of the largest eigenvalue in a Gaussian Unitary Ensemble (GUE).
Define the complex Airy function,
\beq
Ai (u) = \frac{1}{2\pi} \int_{\infty e^{5 j \pi/6}}^{\infty e^{j \pi/6}} e^{jua + j \frac{1}{3}a^3} da
\eeq
the Airy kernel,
\beq
A(u,v) = \frac{Ai(u)Ai'(v)-Ai'(u)Ai(v)}{u-v}
\eeq
and let the $\bm A_x$ be the operator acting on $L^2((x, +\infty))$ with kernel $A(u,v)$.
Then, the second-order Tracy-Widom CDF, $F_{\mc W_2}(x)$, is defined in terms of the Fredholm determinant
\beq
F_{\mc W_2}(x) = \mathrm{det}(1- \bm{A}_x)
\eeq
It also admits an alternative expression. Let $q(u)$ be the solution of the Painlev\'e II differential equation
\beq
q''(u) = u q(u) + 2 q^3(u)
\eeq
satisfying
\beq
q(u) \sim -Ai(u), \; \; \; u \ra +\infty
\eeq
Then
\beq
F_{\mc W_2}(x) = \exp \left( -\int_x^{+\infty} (u-x) q^2(u) du \right)
\eeq
Notice that this definition, and the index $2$, are referred to the case of complex Gaussian variables. In the case of real signals, one should use the corresponding first-order Tracy-Widom distribution \cite{tw}.

\subsection{Airy-type distributions}
\label{at_app}
These distributions are defined in \cite{baik_benarous_peche} as an extension of the Tracy-Widom (GUE) distribution.
Let
\begin{align}
s^{(m)}(u) =  \frac{1}{2\pi} \int_{\infty e^{5 j \pi/6}}^{\infty e^{j \pi/6}} e^{jua + j \frac{1}{3}a^3} \frac{1}{(ja)^{m}} da \\
t^{(m)}(u) =  \frac{1}{2\pi} \int_{\infty e^{5 j \pi/6}}^{\infty e^{j \pi/6}} e^{jua + j \frac{1}{3}a^3} (ja)^{m-1} da
\end{align}
Then, for $k \geq 1$, the CDFs of $\mc{A}_k$ are defined as
\begin{align}
F_{\mc{A}_k}(x) = & \; \mathrm{det}(1- \bm{A}_x) \cdot  \\
& \cdot \mathrm{det}\left(\delta_{mn} - < \frac{1}{1- \bm{A}_x} s^{(m)}, t^{(n)} >\right)_{1 \leq m,n \leq k}  \nonumber
\end{align}
where $<,>$ is the real inner product of functions in $L^2((x, +\infty))$.

For $k=0$, this distribution reduces to the GUE distribution: \beq F_{\mc{A}_0}(x) = F_{\mc W_2}(x) \eeq
For $k=1$, it can be written in the Painlev\'e form \beq F_{\mc{A}_1}(x) = F_{\mc W_2}(x) \exp \left( \int_{x}^{+\infty} q(u) du \right) \eeq

\subsection{Finite GUE distributions}
\label{gue_app}
The distributions $\mc G_k$ are defined in \cite{baik_benarous_peche} as the distribution of the largest eigenvalue in a $k \times k$ GUE. Their CDF is
\begin{align}
& F_{\mc G_k} (x) = ( 2 \pi) ^{-k/2} \left( \prod_{m=1}^k m! \right)^{-1} \cdot \\ & \cdot \int_{-\infty}^x \hdots  \int_{-\infty}^x \prod_{1 \leq m <n \leq k} |\xi_m - \xi_n|^2 \cdot \prod_{m=1}^k e^{-\frac{1}{2} \xi_m^2 } d\xi_1 \hdots d\xi_k \nonumber
\end{align}
In the case $k=1$, it is simply a zero-mean, unit-variance Gaussian distribution:
\beq
F_{\mc G_1}(x) = \frac{1}{2\pi} \int_{-\infty}^x e^{-\frac{1}{2} \xi^2} d\xi \triangleq \mc E(x)
\eeq
We also introduce here a compact expression for CDF and PDF in the case $k=2$, in terms of the Gaussian error function:
\begin{align}
F_{\mc G_2}(x) = \mc E^2(x) -  \frac{1}{\sqrt{2\pi}}x e^{-\frac{x^2}{2}} \mc E(x) -  \frac{1}{2\pi} e^{-x^2} \\
f_{\mc G_2}(x) =   \frac{1}{\sqrt{2\pi}} e^{-\frac{x^2}{2}} (1+x^2) \mc E(x) + \frac{1}{2\pi}x e^{-x^2}
\end{align}
These expressions do not appear in \cite{baik_benarous_peche}.

%% you can choose not to have a title for an appendix
%% if you want by leaving the argument blank
%\section{}
%Appendix two text goes here.

% use section* for acknowledgement
%\section*{Acknowledgment}
%
%
%The authors would like to thank...

% Can use something like this to put references on a page
% by themselves when using endfloat and the captionsoff option.
\ifCLASSOPTIONcaptionsoff
  \newpage
\fi

\end{document}